%%%%%%%%%%%%%%%%%%%%%%%%%%%%%%%%%%%%%%%%%%%%%%%%%%%%%%%%%%%%%%%%%%%%%%%%%%%%
%% Author template for Transportation Science (trsc)
%% Mirko Janc, Ph.D., INFORMS, mirko.janc@informs.org
%% ver. 0.95, December 2010
%%%%%%%%%%%%%%%%%%%%%%%%%%%%%%%%%%%%%%%%%%%%%%%%%%%%%%%%%%%%%%%%%%%%%%%%%%%%
%\documentclass[trsc,blindrev]{informs3}
\documentclass[opre,nonblindrev]{informs3} % current default for manuscript submission

\OneAndAHalfSpacedXI % current default line spacing
%%\OneAndAHalfSpacedXII
%%\DoubleSpacedXII
%%\DoubleSpacedXI

\usepackage[utf8]{inputenc}
\usepackage[english]{babel}

\usepackage[toc,page]{appendix}

% If hyperref is used, dvi-to-ps driver of choice must be declared as
%   an additional option to the \documentclass. For example
%\documentclass[dvips,trsc]{informs3}      % if dvips is used
%\documentclass[dvipsone,trsc]{informs3}   % if dvipsone is used, etc.

% Private macros here (check that there is no clash with the style)

% Natbib setup for author-year style
    \usepackage{natbib}
     \bibpunct[, ]{(}{)}{,}{a}{}{,}%
     \usepackage{dsfont}
    \usepackage{amsmath}
     \usepackage{amssymb}
     
     \let\oldemptyset\emptyset
    \let\emptyset\varnothing
    
    %%%%%%%%%%%%%%%%%%%%%%%^^^^^^^^^^^^^^^^^^^^^^
    %\usepackage{cite}
    \usepackage{enumitem}
    %\usepackage[utf8]{inputenc}
    %\usepackage[english]{babel}
     
    % \usepackage{amsthm}
%\newtheorem{proof}{Proof}
%\newtheorem{proof}{proof}
%%%%%%%%%%%%%%%%%%%%%%%^^^^^^^^^^^^^^^^^^^^^^
%% Setup of theorem styles. Outcomment only one. 
%% Preferred default is the first option.
\TheoremsNumberedThrough     % Preferred (Theorem 1, Lemma 1, Theorem 2)
%\TheoremsNumberedByChapter  % (Theorem 1.1, Lema 1.1, Theorem 1.2)

%% Setup of the equation numbering system. Outcomment only one.
%% Preferred default is the first option.
\EquationsNumberedThrough    % Default: (1), (2), ...
%\EquationsNumberedBySection % (1.1), (1.2), ...

% In the reviewing and copyediting stage enter the manuscript number.
%\MANUSCRIPTNO{} % When the article is logged in and DOI assigned to it,
                 %   this manuscript number is no longer necessary

%%%%%%%%%%%%%%%%
\begin{document}
%%%%%%%%%%%%%%%%

% Outcomment only when entries are known. Otherwise leave as is and 
%   default values will be used.
%\setcounter{page}{1}
%\VOLUME{00}%
%\NO{0}%
%\MONTH{Xxxxx}% (month or a similar seasonal id)
%\YEAR{0000}% e.g., 2005
%\FIRSTPAGE{000}%
%\LASTPAGE{000}%
%\SHORTYEAR{00}% shortened year (two-digit)
%\ISSUE{0000} %
%\LONGFIRSTPAGE{0001} %
%\DOI{10.1287/xxxx.0000.0000}%

% Author's names for the running heads
% Sample depending on the number of authors;
% \RUNAUTHOR{Jones}
% \RUNAUTHOR{Jones and Wilson}
% \RUNAUTHOR{Jones, Miller, and Wilson}
% \RUNAUTHOR{Jones et al.} % for four or more authors
% Enter authors following the given pattern:
%\RUNAUTHOR{}

% Title or shortened title suitable for running heads. Sample:
% \RUNTITLE{Bundling Information Goods of Decreasing Value}
% Enter the (shortened) title:
%\RUNTITLE{}

% Full title. Sample:
% \TITLE{Bundling Information Goods of Decreasing Value}
% Enter the full title:
\TITLE{Risk-Averse Equilibrium for Games}

% Block of authors and their affiliations starts here:
% NOTE: Authors with same affiliation, if the order of authors allows, 
%   should be entered in ONE field, separated by a comma. 
%   \EMAIL field can be repeated if more than one author
\ARTICLEAUTHORS{%
\AUTHOR{Ali Yekkehkhany}
\AFF{University of Illinois at Urbana-Champaign, \EMAIL{yekkehk2@illinois.edu}%, \URL{}
}
\AUTHOR{Timothy Murray}
\AFF{University of Illinois at Urbana-Champaign, \EMAIL{tsmurra2@illinois.edu}%, \URL{}
}
\AUTHOR{Rakesh Nagi}
\AFF{University of Illinois at Urbana-Champaign, \EMAIL{nagi@illinois.edu}%, \URL{}
}
% Enter all authors
} % end of the block

\ABSTRACT{%
The term \textit{rational} has become synonymous with maximizing expected payoff in the definition of the best response in Nash setting. In this work, we consider stochastic games in which players engage only once, or at most a limited number of times. In such games, it may not be rational for players to maximize their expected payoff as they cannot wait for the Law of Large Numbers to take effect. We instead define a new notion of a risk-averse best response, that results in a risk-averse equilibrium (RAE) in which players choose to play the strategy that maximizes the probability of them being rewarded the most in a single round of the game rather than maximizing the expected received reward, subject to the actions of other players. We prove the risk-averse equilibrium to exist in all finite games and numerically compare its performance to Nash equilibrium in finite-time stochastic games.
}%

% Sample
%\KEYWORDS{deterministic inventory theory; infinite linear programming duality; 
%  existence of optimal policies; semi-Markov decision process; cyclic schedule}

% Fill in data. If unknown, outcomment the field
\KEYWORDS{Game Theory, Stochastic Games, Risk-Aversion, Risk-Averse Equilibrium, Prospect Theory.}
%%%%%%%%%\HISTORY{}

\maketitle
%%%%%%%%%%%%%%%%%%%%%%%%%%%%%%%%%%%%%%%%%%%%%%%%%%%%%%%%%%%%%%%%%%%%%%

% Samples of sectioning (and labeling) in TRSC
% NOTE: (1) \section and \subsection do NOT end with a period
%       (2) \subsubsection and lower need end punctuation
%       (3) capitalization is as shown (title style).
%
%\section{Introduction.}\label{intro} %%1.
%\subsection{Duality and the Classical EOQ Problem.}\label{class-EOQ} %% 1.1.
%\subsection{Outline.}\label{outline1} %% 1.2.
%\subsubsection{Cyclic Schedules for the General Deterministic SMDP.}
%  \label{cyclic-schedules} %% 1.2.1
%\section{Problem Description.}\label{problemdescription} %% 2.

% Text of your paper here
\section{Introduction}
\label{introduction}
Since the seminal work of \cite{vonNeumann1947}, the term \textit{rational} has become synonymous with expected utility maximization. Whether in game theoretic situations or simply decision-making under uncertainty, the only agent who can be considered rational is the one who attempts to maximize their mean utility, no matter how many trials will likely be necessary for the realized value to resemble the expected value. However, consider an agent faced with multiple options, one of which is an opportunity with maximum expected utility, but it will bankrupt them with high probability if it fails. In the event of failure, consider that the lack of funds will severely limit any future options the agent may have. For such an agent the fact that the opportunity has maximum expected value among the options cannot be the only relevant factor in deciding whether to pursue the opportunity. If the opportunity does not lead to success, the agent will not be able to pursue any later actions, as they will not have the funds necessary to do so. As a result, players should not solely rely on factors such as expected utility and must instead also consider the probability of success for the opportunity. 

This observation applies to almost all stochastic decision-making situations, including competitive situations best modeled through game theory. To see this, consider a market composed of only a few large firms and a smaller firm considering how to compete with large firms or whether to even enter the market. We take as our example the smartphone industry, in which large companies such as Apple, Samsung, Google, LG, Motorola, Amazon, and Microsoft have all competed in recent years. While Apple and Samsung are market leaders at the time of writing, both have undergone expensive setbacks. Apple's iPhone 5 was widely criticized due to issues with the Apple Maps application and Samsung had to recall its Galaxy Note 7 due to its batteries catching fire, costing an estimated 3 billion USD (\cite{swider2016}), in what may have been an attempt to improve on the criticized battery life of their Galaxy S6. Similarly, Google's original Nexus line of phones dropped in popularity to the point where the company went to the expense of creating a new line of Pixel phones rather than continuing the Nexus. Amazon and Microsoft were forced out of the market entirely, with Amazon's Fire Phone lasting just over a year (July 2014 - August 2015) between release and the cessation of production, causing a loss of at least 170 million USD for Amazon's 2014 Q3 alone (\cite{mccallion2014}). Microsoft meanwhile acquired Nokia for 7.2 billion USD in an attempt to become more competitive in the market (\cite{warren2016}), but ceased mobile device production entirely only a few years later.
% market leaders such as Apple and Samsung continually attempt to outdo each other in terms of cameras, applications, and hardware capabilities while smaller players such as Google, LG, and Motorola try to undercut the cost of the leaders while still providing competitive devices. During their competition most of these companies have seen expensive setbacks related to smartphones at one time or another: Apple's iPhone 5 was widely criticized due to issues with the Apple Maps application, Samsung had to recall its Galaxy Note 7 due to its batteries catching fire (costing an estimated 3 billion USD), and Google's original Nexus line of phones dropped in popularity to the point where it underwent the expense of creating a new line of Pixel phones rather than continuing them. If companies that failed to establish a presence in the smartphone market are considered, Amazon's Fire Phone lasted just over a year (July 2014 - August 2015) between release and the cessation of production, causing the majority of a 437 million USD loss for Amazon's 2014 Q3, and Microsoft's Windows Mobile OS and associated devices were forced out of the smartphone market due to an inability to attract customers.

Despite the cost of the setbacks mentioned above, each of these companies is still valuable with Apple and Microsoft having market caps of over 1 trillion USD at the time of writing and Amazon recently passing that milestone as well. Samsung is worth approximately 300 billion USD at the time of writing, and while they are smaller LG and Motorola are quite valuable as well, worth approximately 14.5 billion and 30 billion USD, respectively. Because of their size, each of these companies was able to take risks to compete with each other which, although expected to end in a positive outcome, resulted in expensive losses. Indeed, Microsoft currently appears to be preparing for another attempt to enter the smartphone market with the Surface Duo. In other words, these companies are still able to compete with each other by making products which maximize their expected values because they are large enough that they can afford to wait for the law of large numbers
%(roughly, as these decisions are not truly independent)
to take effect. This allows their competition to be modeled through a traditional game theoretic framework.

In contrast, consider a company with a smaller valuation, say 500 million USD, deciding whether to compete in the smartphone market. If such a company attempted to do so, it would have to commit most if not all of its resources to the attempt. Even if such a strategy has a large positive expected value, it has a large risk of bankrupting the company, as seen with the scale of the losses incurred by Samsung, Amazon, and Microsoft. More generally, firms in markets where the cost of competition is a significant portion of the value of the firm itself must consider more than just maximizing their expected value. A misstep in such a setting means that the firm is out of the market and unable to compete further. This highlights an important facet of competition with random or unknown variables; i.e., it is not just the expected value of a strategy that is important, it is how many times you get to compete.

In this paper, we build a new framework to apply this observation to game theoretic situations. We consider stochastic games drawn from known distributions in which players engage once or for a given finite number of times. Because of the finite number of times that players engage in these games, given the strategies of all other players, expected utility may not be a suitable metric for a player to attempt to maximize.
Instead, we formulate a new definition of a risk-averse best response, where given the strategies of all other agents, an agent chooses to play the strategy that is most likely to have the highest utility in a single realization of the stochastic game. While the mathematical particulars of this definition will be discussed in Section \ref{section_risk_averse_equilibrium}, conceptually it can best be understood through the lens of prospect theory.

In its most basic form, prospect theory states that consumers prefer choices with lower volatility, even when this results in lower expected utility. An excellent example of this is retirement planning where there are many highly volatile assets which in expectation provide a large return on investment, but which also have a high chance of dropping in value due to their volatility. Most individuals try to avoid investing too much in these assets, receiving a lower average return in order to avoid the chance of a significant loss. Similarly, a risk-averse best response as we have loosely defined it so far would possibly limit the expected return of assets in order to maximize the probability of making the most profit.

The rest of this paper is organized as follows. Section \ref{related_work} provides a thorough review of related work relevant to this topic, in particular a more detailed discussion of prospect theory. Section \ref{section_risk_averse_equilibrium} provides the formal mathematical definition of our proposed risk-averse equilibrium, with several subsections detailing topics such as equilibrium properties, computation, and worked-out examples. Section \ref{section_commit_games} considers finite-time commit games and how the risk-averse equilibria shift as the number of times the games are played increases. Section \ref{numerical_results} lays out a comparison between the classical Nash equilibrium and the proposed risk-averse equilibrium through simulation. Finally, Section \ref{conclusion_future} contains our concluding remarks as well as future directions in which to advance this research.

\section{Related Work}
\label{related_work}

Since the seminal work of \cite{vonNeumann1928} and later \cite{nash1950}, expected utility has emerged as the dominant objective value within game theory as each player attempts to maximize his/her expected utility given the actions of other players. This concept was extended naturally into games of incomplete information (Bayesian games) by \cite{harsanyi1967}, as players can still maximize their expected utility given a distribution from which the game will be drawn. These games have received a great deal of attention as they more accurately model real-world situations where not all parameters are known precisely, with later works such as \cite{wiseman2005} addressing how players sequentially refine their equilibria as they learn the distributions and the more recent \cite{mertikopoulos2019} addressing how players learn their payoffs with continuous action sets. Another recent work (\cite{sugaya2019}) considers the more specific question of how firms in a duopoly should play when the payoff distributions are based on the market state, a random variable with possibly unknown distribution.

Despite all the work that has gone into expected utility as the objective value players wish to maximize, it is still questionable whether this is a good assumption. \cite{goeree2003} present an empirical study of risk-aversion in the matching-pennies game, where they observe marked deviations from Nash behavior (expected utility maximization) as the payoffs/costs become larger. This is consistent with the concept of prospect theory based on empirical observations across several experiments in which the subjects deviate from actions which would maximize their expected utility. \cite{kahneman1979prospecttheory:} formulated the idea of prospect theory, which states that consumers are naturally risk-averse when addressing situations with potential gains and naturally risk-seeking when facing situations with potential losses. Prospect theory has since been widely studied, with an extension of the original paper provided in \cite{tversky1992prospectcumulative} to address more general payoff/cost functions. \cite{levy1992politic_prospect} provides a good overview of classical prospect theory, particularly from a political perspective. Unsurprisingly, prospect theory has received a great deal of attention in financial studies (\cite{baele2018,barberis2016}), with \cite{barberis2001} using it for asset pricing. Prospect theory is not without its critics; e.g., \cite{list2004} posits that the results of the studies on prospect theory are due to inexperienced consumers, and designs an experiment to show these behaviors disappear with experience. However, experienced consumers are by definition consumers who engage in similar trials multiple times, which means that for these consumers expected utility \textit{is} an appropriate metric.
%, particularly if the trials are independent of each other.
As we are explicitly interested in games which will be played at most a small number of times, we do not need to be concerned with this effect.

\section{Risk-Averse Equilibrium}
\label{section_risk_averse_equilibrium}
The problem is formulated in the following subsection.
\subsection{Problem Statement}
\label{problem_statement}
Consider a game that consists of a finite set of $N$ players, $[N] := \{1, 2, \dots, N \}$, where player $i \in [N]$ has a set of possible pure strategies (or actions, used interchangeably) denoted by $S_i$. A pure strategy profile, which is one pure strategy for each player in the game, is denoted by $\boldsymbol{s} = (s_1, s_2, \dots, s_N)$, where $s_i \in S_i$ is the pure strategy of player $i \in [N]$. Hence, $\boldsymbol{S} = S_1 \times S_2 \times \dots \times S_N$ is the set of pure strategy profiles. A pure strategy choice for all players except player $i$ is denoted by $\boldsymbol{s}_{-i}$, i.e. $\boldsymbol{s} = (s_i, \boldsymbol{s}_{-i})$.
The payoff of player $i$ for a pure strategy profile $\boldsymbol{s} \in \boldsymbol{S}$ is denoted by $U_i(\boldsymbol{s})$ (or $U_i(s_i, \boldsymbol{s}_{-i})$), which is a random variable with probability density function (pdf) $f_i(x | \boldsymbol{s})$ and mean $u_i(\boldsymbol{s})$. The payoffs $U_i(\boldsymbol{s})$ for $i \in [N]$ and $\boldsymbol{s} \in \boldsymbol{S}$ are considered to be continuous-type random variables that are independent from each other.
\begin{remark}
The same analysis holds for discrete-type random variables if the analysis is treated with a bit more subtlety as discussed in the end of this section.
\end{remark}

For any set $S_i$, let $\Sigma_i$ be the set of all probability distributions over $S_i$. The Cartesian product of all players' mixed strategy sets, $\boldsymbol{\Sigma} = \Sigma_1 \times \Sigma_2 \times \dots \times \Sigma_N$, is the set of mixed strategy profiles.
Denote a specific mixed strategy of player $i$ by $\sigma_i \in \Sigma_i$, where $\sigma_i(s_i)$ is the probability that player $i$ plays strategy $s_i$.
If the $[N] \setminus i$ players choose to play a mixed strategy $\boldsymbol{\sigma}_{-i}$, the payoff for player $i$ if he/she plays $s_i \in S_i$ is denoted by $\overline{U}_i(s_i, \boldsymbol{\sigma}_{-i})$.
Using the law of total probability, the marginal distribution of $\overline{U}_i(s_i \boldsymbol{\sigma}_{-i})$ has the probability distribution function
\begin{equation}
    \label{mixed_distribution}
    \bar{f}_i(x | (s_i, \boldsymbol{\sigma}_{-i})) = \sum_{\boldsymbol{s}_{-i} \in \boldsymbol{S}_{-i}} \Bigg ( f_i(x | (s_i, \boldsymbol{s}_{-i})) \cdot \boldsymbol{\sigma}(\boldsymbol{s}_{-i}) \Bigg ),
\end{equation}
where $\boldsymbol{\sigma}(\boldsymbol{s}_{-i}) =  \prod_{j \in [N] \setminus i} \sigma_j (s_j)$ and $s_j$ is the corresponding strategy of player $j$ in $\boldsymbol{s}_{-i}$.
Note that for $s_i \neq s_i' \in S_i$, the random variables $\overline{U}_i(s_i, \boldsymbol{\sigma}_{-i})$ and $\overline{U}_i(s_i', \boldsymbol{\sigma}_{-i})$ are not independent of each other in a single play of the game.

\subsection{Risk-Averse Equilibrium}
%We present some preliminaries before getting into the mixed strategy risk-averse equilibrium.

%Note that $\overline{U}_i(s_i, \boldsymbol{\sigma}_{-i})$ is not independent of $\overline{U}_i(s_i', \boldsymbol{\sigma}_{-i})$ in a single play of the game for $s_i, s_i' \in S_i$, so their marginal distributions cannot be used to find their joint distribution.

In a stochastic game where the payoffs are random variables, playing the Nash equilibrium considering the expected payoffs may create a risky situation; e.g., see \cite{yekkehkhany2019risk} and \cite{yekkehkhanycost} and the references therein for examples on multi-armed bandits. The reason is that payoffs with larger expectations may have a larger variance as well. As a result, it may be the case that playing strategies with lower expectations is more probable to have a larger payoff. This concept is mostly helpful when players play the game once, so they do not have the chance to repeat the game and gain a larger cumulative payoff by playing the strategy with the largest expected payoff. As a result, we propose the risk-averse equilibrium in a probabilistic sense rather than in an expectation sense as the Nash equilibrium.
%If the $[N] \setminus i$ players choose to play a mixed strategy $\boldsymbol{\sigma}_{-i}$, using the law of total probability, the distribution of the payoff for player $i$ when he/she plays $s_i$, denoted by $\overline{U}_i(s_i, \boldsymbol{\sigma}_{-i})$, has the probability distribution function
%\begin{equation}
%    \label{mixed_distribution}
%    \bar{f}_i(x | (s_i, \boldsymbol{\sigma}_{-i})) = \sum_{\boldsymbol{s}_{-i} \in \boldsymbol{S}_{-i}} \Bigg ( f_i(x | (s_i, \boldsymbol{s}_{-i})) \cdot \boldsymbol{\sigma}(\boldsymbol{s}_{-i}) \Bigg ),
%\end{equation}
%where $\boldsymbol{\sigma}(\boldsymbol{s}_{-i}) =  \prod_{j \in [N] \setminus i} \sigma_j (s_j)$ and $s_j$ is the corresponding strategy of player $j$ in $\boldsymbol{s}_{-i}$.
%Note that for $s_i, s_i' \in S_i$, the random variables $\overline{U}_i(s_i, \boldsymbol{\sigma}_{-i})$ and $\overline{U}_i(s_i', \boldsymbol{\sigma}_{-i})$ are not independent from each other in a single play of the game.
%As a result, the joint distribution of $\overline{U}_i(s_i, \boldsymbol{\sigma}_{-i})$ and $\overline{U}_i(s_i', \boldsymbol{\sigma}_{-i})$
From an individual player's point of view, the best response to a mixed strategy of the rest of players is defined as follows.

\begin{definition}
\label{def_best_response_mixed}
%In order to define the best response of player $i$ to the mixed strategy profile $\boldsymbol{\sigma}_{-i}$ of the other players,
%Consider $|S_i|$ independent plays between player $i$ and players $[N] \setminus i$.
%In each of the $|S_i|$ games, players $[N] \setminus i$ independently play according to a mixed strategy profile $\boldsymbol{\sigma}_{-i}$ and player $i$ selects a pure strategy $s_i \in S_i$ such that he/she plays all his/her pure strategies in the $|S_i|$ games. Denote the payoff for player $i$ when he/she plays $s_i \in S_i$ by $\overline{U}_i(s_i, \boldsymbol{\sigma}_{-i})$.
The set of mixed strategy risk-averse best responses of player $i$ to the mixed strategy profile $\boldsymbol{\sigma}_{-i}$ is
%the set of mixed strategies $\hat{\sigma}_i \in \Sigma_i$ that are
the set of all probability distributions over the set
\begin{equation}
    \label{eq_mixed_best_response}
    \argmax_{s_i \in S_i} P \Big ( \overline{U}_i \left ( s_i, \boldsymbol{\sigma}_{-i} \right ) \geq \overline{\boldsymbol{U}}_i \left ( S_i \setminus s_i, \boldsymbol{\sigma}_{-i} \right ) \Big ),
\end{equation}
%where the argument of argmax is calculated in Equation \ref{eq_prob_greater}.
where what we mean by $U_i \left ( s_i, \boldsymbol{\sigma}_{-i} \right )$ being greater than or equal to $\boldsymbol{U}_i \left ( S_i \setminus s_i, \boldsymbol{\sigma}_{-i} \right )$ when $S_i \setminus s_i \neq \oldemptyset$ is that  $U_i \left ( s_i, \boldsymbol{\sigma}_{-i} \right )$ is greater than or equal to $U_i \left ( s_i', \boldsymbol{\sigma}_{-i} \right )$ for all $s_i' \in S_i \setminus s_i$; otherwise, if $S_i \setminus s_i = \oldemptyset$, player $i$ only has a single option that can be played.
The same randomness on the action of players $[N] \setminus i$ is considered in $\overline{U}_i(s_i, \boldsymbol{\sigma}_{-i})$ for all $s_i \in S_i$, and independent randomness on actions is analyzed in the Appendix.
We denote the risk-averse best response set of player $i$'s strategies, given the other players' mixed strategies $\boldsymbol{\sigma}_{-i}$, by $RB(\boldsymbol{\sigma}_{-i})$, which is in general a set-valued function.
%(overloading notation, $BR(.)$ is used for both pure and mixed strategy risk-averse best response).
\end{definition}

Given the definition of the risk-averse best response, the risk-averse equilibrium (RAE) is defined as follows.

\begin{definition}
\label{def_mixed_strategy_RAE}
A strategy profile $\boldsymbol{\sigma}^* = (\sigma_1^*, \sigma_2^*, \dots,$ $\sigma_N^*)$ is a risk-averse equilibrium (RAE), if and only if $\sigma_i^* \in RB(\boldsymbol{\sigma}_{-i}^*)$ for all $i \in [N]$.
\end{definition}

The following theorem proves the existence of a mixed strategy risk-averse equilibrium for a game with finite number of players and finite number of strategies per player.

\begin{theorem}
\label{theorem_existence}
For any finite $N$-player game, a risk-averse equilibrium exists.
\end{theorem}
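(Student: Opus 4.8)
The plan is to mimic the classical Nash existence proof via the Kakutani fixed-point theorem, replacing the expected-payoff best response by the risk-averse best response $RB(\boldsymbol{\sigma}_{-i})$. First I would define the combined best-response correspondence $RB : \boldsymbol{\Sigma} \to 2^{\boldsymbol{\Sigma}}$ by $RB(\boldsymbol{\sigma}) = RB(\boldsymbol{\sigma}_{-1}) \times \dots \times RB(\boldsymbol{\sigma}_{-N})$, so that a fixed point $\boldsymbol{\sigma}^* \in RB(\boldsymbol{\sigma}^*)$ is exactly a risk-averse equilibrium by Definition \ref{def_mixed_strategy_RAE}. The domain $\boldsymbol{\Sigma}$ is a product of simplices, hence nonempty, compact, and convex. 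To apply Kakutani I must verify that $RB(\boldsymbol{\sigma})$ is (i) nonempty, (ii) convex, and (iii) that the correspondence has a closed graph (upper hemicontinuity into a compact space).

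Nonemptiness and convexity I expect to be straightforward from the structure of Definition \ref{def_best_response_mixed}: the argmax in \eqref{eq_mixed_best_response} is taken over the finite set $S_i$, so it is nonempty, and $RB(\boldsymbol{\sigma}_{-i})$ is defined to be \emph{all} probability distributions over that argmax set, which is automatically a convex (indeed simplex-shaped) subset of $\Sigma_i$. The substantive work is the closed-graph condition, which reduces to showing that the scalar objective
\begin{equation}
    \label{eq:winprob}
    W_i(s_i, \boldsymbol{\sigma}_{-i}) := P \Big ( \overline{U}_i \left ( s_i, \boldsymbol{\sigma}_{-i} \right ) \geq \overline{\boldsymbol{U}}_i \left ( S_i \setminus s_i, \boldsymbol{\sigma}_{-i} \right ) \Big )
\end{equation}
is continuous in $\boldsymbol{\sigma}_{-i}$ for each fixed $s_i$. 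Once each $W_i(s_i, \cdot)$ is continuous, a standard argument (Berge's maximum theorem, or a direct sequential argument) shows that $\argmax_{s_i} W_i(s_i, \boldsymbol{\sigma}_{-i})$ has closed graph, and passing to the set of distributions over a set-valued argmax preserves the closed graph, giving upper hemicontinuity of $RB$.

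The continuity of $W_i$ is where I expect the main obstacle to lie, and it is the genuinely new ingredient relative to the Nash proof. I would argue it directly from the mixture representation \eqref{mixed_distribution}: the joint law of the vector $\big(\overline{U}_i(s_i,\boldsymbol{\sigma}_{-i})\big)_{s_i \in S_i}$ is, by the law of total probability, a finite convex combination indexed by $\boldsymbol{s}_{-i} \in \boldsymbol{S}_{-i}$ with weights $\boldsymbol{\sigma}(\boldsymbol{s}_{-i}) = \prod_{j \neq i} \sigma_j(s_j)$, and crucially the \emph{same} realization $\boldsymbol{s}_{-i}$ is shared across all $s_i$ (the coupling emphasized after \eqref{mixed_distribution} and in Definition \ref{def_best_response_mixed}). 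Hence $W_i(s_i,\boldsymbol{\sigma}_{-i}) = \sum_{\boldsymbol{s}_{-i}} \boldsymbol{\sigma}(\boldsymbol{s}_{-i}) \, p_i(s_i, \boldsymbol{s}_{-i})$, where $p_i(s_i,\boldsymbol{s}_{-i}) = P\big(U_i(s_i,\boldsymbol{s}_{-i}) \geq U_i(s_i',\boldsymbol{s}_{-i}) \ \forall s_i' \neq s_i\big)$ is a \emph{constant} not depending on $\boldsymbol{\sigma}$, computed from the fixed pure-strategy payoff distributions $f_i(\cdot\mid\boldsymbol{s})$ which are independent across the relevant coordinates. Thus $W_i(s_i,\cdot)$ is in fact a polynomial in the mixed-strategy probabilities $\{\sigma_j(s_j)\}$, and polynomials are continuous; this dispatches the hard step cleanly. (One must check that the tie event $U_i(s_i,\boldsymbol{s}_{-i}) = U_i(s_i',\boldsymbol{s}_{-i})$ carries no mass so the constants $p_i$ are well defined, which holds because the payoffs are continuous-type random variables by assumption; the discrete case is handled by the tie-breaking subtlety the authors flag.) With continuity of $W_i$ established, all three Kakutani hypotheses hold, the correspondence $RB$ has a fixed point $\boldsymbol{\sigma}^*$, and that fixed point is the desired risk-averse equilibrium.
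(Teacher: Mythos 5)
Your proposal is correct and follows essentially the same route as the paper: Kakutani's fixed point theorem applied to the correspondence $\boldsymbol{\sigma} \mapsto \big(RB(\boldsymbol{\sigma}_{-1}),\dots,RB(\boldsymbol{\sigma}_{-N})\big)$, with nonemptiness and convexity read off from the definition and the closed-graph condition as the substantive step. Your justification of that step — continuity of $P\big(\overline{U}_i(s_i,\boldsymbol{\sigma}_{-i}) \geq \overline{\boldsymbol{U}}_i(S_i\setminus s_i,\boldsymbol{\sigma}_{-i})\big)$ via the multilinear representation $\sum_{\boldsymbol{s}_{-i}}\boldsymbol{\sigma}(\boldsymbol{s}_{-i})\,p_i(s_i,\boldsymbol{s}_{-i})$ — is in fact more explicit than the paper's $\epsilon$-contradiction argument, which merely asserts the needed continuity from the payoffs being continuous random variables; the representation you use is exactly the paper's Equation \eqref{prob_matrix}, stated there only later for computational purposes.
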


\begin{proof}
{Proof:}Consider the risk-averse best response function $\boldsymbol{RB}: \boldsymbol{\Sigma} \rightarrow \boldsymbol{\Sigma}$ defined as $\boldsymbol{RB}(\boldsymbol{\sigma}) = \big (RB(\boldsymbol{\sigma}_{-1}), RB(\boldsymbol{\sigma}_{-2}),$ $\dots, RB(\boldsymbol{\sigma}_{-N}) \big )$. The existence of a risk-averse equilibrium is equivalent to the existence of a fixed point $\boldsymbol{\sigma}^* \in \boldsymbol{\Sigma}$ such that $\boldsymbol{\sigma}^* \in \boldsymbol{RB}(\boldsymbol{\sigma}^*)$.
Kakutani's Fixed Point Theorem is used to prove the existence of a fixed point for $\boldsymbol{RB}(\boldsymbol{\sigma})$.
In order to use Kakutani's theorem, the four conditions listed below should be satisfied, which are proven as follows.
\begin{enumerate}[leftmargin=*]
    \item $\boldsymbol{\Sigma}$ is a nonempty subset of a finite dimensional Euclidean space, compact, and convex:
    $\boldsymbol{\Sigma}$ is nonempty since it is the Cartesian product of nonempty simplices as each player has at least one feasible pure strategy.
    $\boldsymbol{\Sigma}$ is bounded since each of its elements is between zero and one, and is closed since it is the Cartesian product of simplices, so $\boldsymbol{\Sigma}$ contains all its limit points.
    \item $\boldsymbol{RB(\boldsymbol{\sigma})}$ is nonempty for all $\boldsymbol{\sigma} \in \boldsymbol{\Sigma}$:
    $RB(\boldsymbol{\sigma}_{-i})$ is the set of all probability distributions over the set specified in Equation \eqref{eq_mixed_best_response}, where the mentioned set is nonempty since maximum always exists for finite number of values.
    \item $\boldsymbol{RB(\boldsymbol{\sigma})}$ is a convex set for all $\boldsymbol{\sigma} \in \boldsymbol{\Sigma}$:
    It suffices to prove that $RB(\boldsymbol{\sigma}_{-i})$ is a convex set for all $\boldsymbol{\sigma}_{-i} \in \boldsymbol{\Sigma}_{-i}$. Consider $\sigma_i', \sigma_i'' \in RB(\boldsymbol{\sigma}_{-i})$ and $\lambda \in [0, 1]$. Define the support of $\sigma_i'$ and $\sigma_i''$ as $supp(\sigma_i') = \{s_i \in S_i: \sigma_i'(s_i) > 0\}$ and $supp(\sigma_i'') = \{s_i \in S_i: \sigma_i''(s_i) > 0\}$, respectively.
    From the definition of risk-averse best response in Definition \ref{def_best_response_mixed}, $supp(\sigma_i'), supp(\sigma_i'') \subseteq \argmax_{s_i \in S_i} P \Big ( \overline{U}_i \left ( s_i, \boldsymbol{\sigma}_{-i} \right ) \geq \overline{\boldsymbol{U}}_i \left ( S_i \setminus s_i, \boldsymbol{\sigma}_{-i} \right ) \Big )$. As a result, $supp(\sigma_i') \cup supp(\sigma_i'') \subseteq \argmax_{s_i \in S_i} P \Big ( \overline{U}_i \left ( s_i, \boldsymbol{\sigma}_{-i} \right ) \geq \overline{\boldsymbol{U}}_i \left ( S_i \setminus s_i, \boldsymbol{\sigma}_{-i} \right ) \Big )$, and again due to definition of risk-averse best response, any probability distribution over the set $supp(\sigma_i') \cup supp(\sigma_i'')$ is also a best response to $\boldsymbol{\sigma}_{-i}$. The mixed strategy $\lambda \sigma_i' + (1 - \lambda) \sigma_i''$ is obviously a valid probability distribution over the set $supp(\sigma_i') \cup supp(\sigma_i'')$, so $\lambda \sigma_i' + (1 - \lambda) \sigma_i'' \in RB(\boldsymbol{\sigma}_{-i})$ that completes the proof for convexity of the set $RB(\boldsymbol{\sigma}_{-i})$.
    \item $\boldsymbol{RB(\boldsymbol{\sigma})}$ has a closed graph:
    $\boldsymbol{RB}(\boldsymbol{\sigma})$ has a closed graph if for any sequence $\{\boldsymbol{\sigma}^n, \widehat{\boldsymbol{\sigma}}^n\} \rightarrow \{\boldsymbol{\sigma}, \widehat{\boldsymbol{\sigma}}\}$ with $\widehat{\boldsymbol{\sigma}}^n \in \boldsymbol{RB}(\boldsymbol{\sigma}^n)$ for all $n \in \mathbb{N}$, we have $\widehat{\boldsymbol{\sigma}} \in \boldsymbol{RB}(\boldsymbol{\sigma})$. The fact that $\boldsymbol{RB}(\boldsymbol{\sigma})$ has a closed graph is proved by contradiction. Consider that $\boldsymbol{RB}(\boldsymbol{\sigma})$ does not have a closed graph.
    Then, there exists a sequence $\{\boldsymbol{\sigma}^n, \widehat{\boldsymbol{\sigma}}^n\} \rightarrow \{\boldsymbol{\sigma}, \widehat{\boldsymbol{\sigma}}\}$ with $\widehat{\boldsymbol{\sigma}}^n \in \boldsymbol{RB}(\boldsymbol{\sigma}^n)$ for all $n \in \mathbb{N}$, but $\widehat{\boldsymbol{\sigma}} \notin \boldsymbol{RB}(\boldsymbol{\sigma})$. This means there exists some $i \in [N]$ such that $\widehat{\sigma}_i \notin RB(\boldsymbol{\sigma}_{-i})$. As a result, due to the definition of risk-averse best response in Definition \ref{def_best_response_mixed}, there exists $\widehat{s}_i \in supp(\widehat{\sigma}_i)$, $s_i' \in S_i$, where $s_i'$ can be any of the strategies in the set $supp(RB(\boldsymbol{\sigma}_{-i}))$, and some $\epsilon > 0$ such that
    \begin{equation}
        \label{eq_contradiction_1}
        P \Big ( \overline{U}_i \left ( s_i', \boldsymbol{\sigma}_{-i} \right ) \geq \overline{\boldsymbol{U}}_i \left ( S_i \setminus s_i', \boldsymbol{\sigma}_{-i} \right ) \Big ) > P \Big ( \overline{U}_i \left ( \widehat{s}_i, \boldsymbol{\sigma}_{-i} \right ) \geq \overline{\boldsymbol{U}}_i \left ( S_i \setminus \widehat{s}_i, \boldsymbol{\sigma}_{-i} \right ) \Big ) + 3 \epsilon.
    \end{equation}
    Given that payoffs are continuous random variables and $\boldsymbol{\sigma}_{-i}^n \rightarrow \boldsymbol{\sigma}_{-i}$, for a sufficiently large $n$ we have
    \begin{equation}
        \label{eq_contradiction_2}
        P \Big ( \overline{U}_i \left ( s_i', \boldsymbol{\sigma}_{-i}^n \right ) \geq \overline{\boldsymbol{U}}_i \left ( S_i \setminus s_i', \boldsymbol{\sigma}_{-i}^n \right ) \Big ) > P \Big ( \overline{U}_i \left ( s_i', \boldsymbol{\sigma}_{-i} \right ) \geq \overline{\boldsymbol{U}}_i \left ( S_i \setminus s_i', \boldsymbol{\sigma}_{-i} \right ) \Big ) - \epsilon.
    \end{equation}
    By combining Equations \eqref{eq_contradiction_1} and \eqref{eq_contradiction_2}, for a sufficiently large $n$ we have
    \begin{equation}
        \label{eq_contradiction_3}
        P \Big ( \overline{U}_i \left ( s_i', \boldsymbol{\sigma}_{-i}^n \right ) \geq \overline{\boldsymbol{U}}_i \left ( S_i \setminus s_i', \boldsymbol{\sigma}_{-i}^n \right ) \Big ) > P \Big ( \overline{U}_i \left ( \widehat{s}_i, \boldsymbol{\sigma}_{-i} \right ) \geq \overline{\boldsymbol{U}}_i \left ( S_i \setminus \widehat{s}_i, \boldsymbol{\sigma}_{-i} \right ) \Big ) + 2 \epsilon.
    \end{equation}
    Due to the same reasoning as for Equation \eqref{eq_contradiction_2}, for a sufficiently large $n$ we have
    \begin{equation}
        \label{eq_contradiction_4}
        P \Big ( \overline{U}_i \left ( \widehat{s}_i^n, \boldsymbol{\sigma}_{-i}^n \right ) \geq \overline{\boldsymbol{U}}_i \left ( S_i \setminus \widehat{s}_i^n, \boldsymbol{\sigma}_{-i}^n \right ) \Big ) < P \Big ( \overline{U}_i \left ( \widehat{s}_i, \boldsymbol{\sigma}_{-i} \right ) \geq \overline{\boldsymbol{U}}_i \left ( S_i \setminus \widehat{s}_i, \boldsymbol{\sigma}_{-i} \right ) \Big ) + \epsilon,
    \end{equation}
    where $\widehat{s}_i^n \in supp(RB(\boldsymbol{\sigma}_{-i}^n))$. Combining Equations \eqref{eq_contradiction_3} and \eqref{eq_contradiction_4}, for a sufficiently large $n$ we have
    \begin{equation}
        \label{eq_contradiction_5}
        P \Big ( \overline{U}_i \left ( s_i', \boldsymbol{\sigma}_{-i}^n \right ) \geq \overline{\boldsymbol{U}}_i \left ( S_i \setminus s_i', \boldsymbol{\sigma}_{-i}^n \right ) \Big ) > P \Big ( \overline{U}_i \left ( \widehat{s}_i^n, \boldsymbol{\sigma}_{-i}^n \right ) \geq \overline{\boldsymbol{U}}_i \left ( S_i \setminus \widehat{s}_i^n, \boldsymbol{\sigma}_{-i}^n \right ) \Big ) +  \epsilon.
    \end{equation}
    However, Equation \eqref{eq_contradiction_5} contradicts the fact that $\widehat{s}_i^n \in supp(RB(\boldsymbol{\sigma}_{-i}^n))$.
\end{enumerate}
The above four properties of the risk-averse best response function $\boldsymbol{RB}(\boldsymbol{\sigma})$ fulfil the conditions for Kakutani's Fixed Point Theorem. This means that for a finite $N$-player game, there always exists $\boldsymbol{\sigma}^* \in \boldsymbol{\Sigma}$ such that $\boldsymbol{\sigma}^* \in \boldsymbol{RB}(\boldsymbol{\sigma}^*)$, where by definition $\boldsymbol{\sigma}^*$ is a mixed strategy risk-averse equilibrium. \Halmos
\end{proof}

\subsection{Pure Strategy Risk-Averse Equilibrium}
\label{pure_Strategy_RAE}
The pure strategy risk-averse best response is defined in the following as a specific case of the risk-averse best response defined in Definition \ref{def_best_response_mixed}.
%which is motivated by the illustrative example given in Subsection \ref{illustrative_example}.
%that is different from the classical definition of best response.

\begin{definition}
\label{def_best_response}
Pure strategy $\widehat{s}_i$ of player $i$ is a risk-averse best response (RB) to the pure strategy $\boldsymbol{s}_{-i}$ of the other players if
\begin{equation}
    \label{pure_best_response}
    \left\{
    \begin{array}{ll}
        \widehat{s}_i \in \argmax_{s_i \in S_i} P \Big ( U_i \left ( s_i, \boldsymbol{s}_{-i} \right ) \geq \boldsymbol{U}_i \left ( S_i \setminus s_i, \boldsymbol{s}_{-i} \right ) \Big ), \ \ \ \text{ if } S_i \setminus s_i \neq \oldemptyset,
        \\
        \widehat{s}_i = s_i, \ \ \  \text{ if } S_i \setminus s_i = \oldemptyset,
    \end{array}
    \right.
\end{equation}
where what we mean by $U_i \left ( s_i, \boldsymbol{s}_{-i} \right )$ being greater than or equal to $\boldsymbol{U}_i \left ( S_i \setminus s_i, \boldsymbol{s}_{-i} \right )$ is that $U_i \left ( s_i, \boldsymbol{s}_{-i} \right )$ is greater than or equal to $U_i \left ( s_i', \boldsymbol{s}_{-i} \right )$ for all $s_i' \in S_i \setminus s_i$. We denote the risk-averse best response set of player $i$, given the other players' pure strategies $\boldsymbol{s}_{-i}$, by $RB(\boldsymbol{s}_{-i})$ (overloading notation, $RB(.)$ is used for both pure and mixed strategy risk-averse best response).
\end{definition}

Given the definition of the pure strategy risk-averse best response, the pure strategy risk-averse equilibrium (RAE), which does not necessarily exist, is defined below.

\begin{definition}
\label{def_pure_strategy_RAE}
A pure strategy profile $\boldsymbol{s}^* = (s_1^*, s_2^*, \dots,$ $s_N^*)$ is a pure strategy risk-averse equilibrium (RAE), if and only if $s_i^* \in RB(\boldsymbol{s}_{-i}^*)$ for all $i \in [N]$.
\end{definition}

\subsection{Strict Dominance and Iterated Elimination of Strictly Dominated Strategies}

Probably the most basic solution concept for a game is the dominant strategy equilibrium.
%Finding a strictly dominated strategy in the risk-averse framework is not as straight forward as it is for the Nash equilibrium.
In the following definition, the strict dominance is described.

\begin{definition}
\label{def_strict_dominance}
A pure strategy $s_i \in S_i$ of player $i$ strictly dominates a second pure strategy $s_i' \in S_i$ of the player if
\begin{equation}
    \label{eq_strict_dominance}
    P \Big ( U_i \left ( s_i, \boldsymbol{s}_{-i} \right ) \geq \boldsymbol{U}_i \left ( S_i \setminus s_i, \boldsymbol{s}_{-i} \right ) \Big ) > P \Big ( U_i \left ( s_i', \boldsymbol{s}_{-i} \right ) \geq \boldsymbol{U}_i \left ( S_i \setminus s_i', \boldsymbol{s}_{-i} \right ) \Big ), \forall \boldsymbol{s}_{-i} \in \boldsymbol{S}_{-i}.
\end{equation}
%where what we mean by $U_i \left ( s_i, \boldsymbol{s}_{-i}^1 \right )$ being greater than or equal to $\boldsymbol{U}_i \left ( S_i \setminus s_i, \boldsymbol{s}_{-i}^{2:|S_i|} \right )$ is that $U_i \left ( s_i, \boldsymbol{s}_{-i}^1 \right )$ being greater than or equal to $U_i \left ( \widehat{s}_i, \boldsymbol{s}_{-i}^k \right )$ for all $\widehat{s}_i \in S_i \setminus s_i$, where each $\widehat{s}_i \in S_i \setminus s_i$ is associated with a possibly different pure strategy of other players $\boldsymbol{s}_{-i}^k \in \boldsymbol{S}_{-i}$ for all $2 \leq k \leq |S_i|$. Note that the associations of $\widehat{s}_i \in S_i$ and $\boldsymbol{s}_{-i}^k \in \boldsymbol{S}_{-i}$ on both sides of Equation \eqref{eq_strict_dominance} remain the same except for $s_i$ and $s_i'$ for which the associations are switched with each other.
\end{definition}

%Note that the strictly dominated strategies of a player cannot directly be found from the risk-averse probably matrix, but finding a strictily dominated strategy needs more sophisticated calculations.
A strictly dominated strategy cannot be the risk-averse best response to any mixed strategy profile of other players due to the following reason. Consider that $s_i' \in S_i$ is strictly dominated by $s_i \in S_i$ for player $i$ as is stated in Definition \ref{def_strict_dominance}. Then, for any $\boldsymbol{\sigma}_{-i} \in \boldsymbol{\Sigma}_{-i}$, we have

\begin{equation}
    \label{eq_proof_strict_dominance}
    \begin{aligned}
        & P \Big ( \overline{U}_i(s_i, \boldsymbol{\sigma}_{-i}) \geq \overline{U}_i(S_i \setminus s_i, \boldsymbol{\sigma}_{-i}) \Big ) \\
        \overset{(a)}{=} & \sum_{\boldsymbol{s}_{-i} \in \boldsymbol{S}_{-i}} \Bigg ( P \Big ( U_i(s_i, \boldsymbol{s}_{-i}) \geq \boldsymbol{U}_i(S_i \setminus s_i, \boldsymbol{s}_{-i}) \Big ) \cdot \boldsymbol{\sigma}(\boldsymbol{s}_{-i}) \Bigg ) \\
        \overset{(b)}{>} & \sum_{\boldsymbol{s}_{-i} \in \boldsymbol{S}_{-i}} \Bigg ( P \Big ( U_i(s_i', \boldsymbol{s}_{-i}) \geq \boldsymbol{U}_i(S_i \setminus s_i', \boldsymbol{s}_{-i}) \Big ) \cdot \boldsymbol{\sigma}(\boldsymbol{s}_{-i}) \Bigg ) \\
        = & \ P \Big ( \overline{U}_i(s_i', \boldsymbol{\sigma}_{-i}) \geq \overline{U}_i(S_i \setminus s_i', \boldsymbol{\sigma}_{-i}) \Big ),
    \end{aligned}
\end{equation}
where $(a)$ is followed by using the law of total probability, $\boldsymbol{\sigma}(\boldsymbol{s}_{-i}) =  \prod_{j \in [N] \setminus i} \sigma_j (s_j)$ and $s_j$ is the corresponding strategy of player $j$ in $\boldsymbol{s}_{-i}$, and
$(b)$ is true by the assumption that the pure strategy $s_i'$ is strictly dominated by the pure strategy $s_i$ and using Equation \eqref{eq_strict_dominance} in Definition \ref{def_strict_dominance} on strict dominance.
By Equation \eqref{eq_proof_strict_dominance} and Equation \eqref{eq_mixed_best_response} in Definition \ref{def_best_response_mixed} on the best response to a mixed strategy profile of other players, a strictly dominated pure strategy can never be a best response to any mixed strategy profile of other players. As a result, a strictly dominated pure strategy can be removed from the set of strategies of a player and iterated elimination of strictly dominated strategies can be applied to a game under the risk-averse framework.

\subsection{Finding the Risk-Averse Equilibrium}
The mixed strategy risk-averse equilibrium of a game can be found by choosing players' mixed strategy profiles in such a way that a player cannot strategize against other players. In other words, under a mixed strategy risk-averse equilibrium, all players are indifferent to their mixed strategies, so they use a mixed strategy to make other players indifferent as well. If all players are indifferent to their mixed risk-averse strategies, then no player has an incentive to change strategies, so we end up with a mixed strategy risk-averse equilibrium. Formally speaking, a risk-averse mixed strategy is characterized by $\sigma_i(s_i)$ for all $i \in [N]$ and for all $s_i \in S_i$, so there are $\sum_{i \in [N]} |S_i|$ parameters that should be found.
Letting the mixed strategy profile for players $[N] \setminus i$ be $\boldsymbol{\sigma}_{-i} \in \boldsymbol{\Sigma}_{-i}$, then in order for player $i$ to be indifferent to his/her set of strategies among a subset $S_i'\subseteq S_i$, we need to have
% \begin{equation*}
%     P \Big ( \overline{U}_i \left ( s_i, \boldsymbol{\sigma}_{-i} \right ) \geq \overline{\boldsymbol{U}}_i \left ( \textcolor{blue}{S_i'} \setminus s_i, \boldsymbol{\sigma}_{-i} \right ) \Big ) = P \Big ( \overline{U}_i \left ( s_i', \boldsymbol{\sigma}_{-i} \right ) \geq \overline{\boldsymbol{U}}_i \left ( \textcolor{blue}{S_i'} \setminus s_i', \boldsymbol{\sigma}_{-i} \right ) \Big ), \ \forall s_i, s_i' \in \textcolor{blue}{S_i'}.
% \end{equation*}
\begin{equation*}
    P \Big ( \overline{U}_i \left ( s_i', \boldsymbol{\sigma}_{-i} \right ) \geq \overline{\boldsymbol{U}}_i \left ( S_i \setminus s_i', \boldsymbol{\sigma}_{-i} \right ) \Big ) \geq P \Big ( \overline{U}_i \left ( s_i, \boldsymbol{\sigma}_{-i} \right ) \geq \overline{\boldsymbol{U}}_i \left ( S_i \setminus s_i, \boldsymbol{\sigma}_{-i} \right ) \Big ), \ \forall s_i\in S_i, s_i' \in S_i'
\end{equation*}
The above equations reveal $|S_i| - 1$ independent equations for each player $i$, so in total $\sum_{i \in [N]} |S_i| - N$ equations are derived. The remaining $N$ equations are provided by the fact that the mixed strategy of each player adds to one for their set of strategies. As a result, if there is a mixed strategy risk-averse equilibrium for which only a subset $\boldsymbol{S}'=\{S_1',S_2',...,S_N'\}$ of the pure strategies, denoted as the \textit{support} of the equilibrium, are played with non-zero probability, this equilibrium is a solution of the following set of equations for $\boldsymbol{\sigma} \in \boldsymbol{\Sigma}$:
\begin{equation}
\label{eq_find_mixed_strategy_RAE}
    \left\{
    \begin{array}{ll}
        P \Big ( \overline{U}_i \left ( s_i', \boldsymbol{\sigma}_{-i} \right ) \geq \overline{\boldsymbol{U}}_i \left ( S_i \setminus s_i', \boldsymbol{\sigma}_{-i} \right ) \Big ) \geq P \Big ( \overline{U}_i \left ( s_i, \boldsymbol{\sigma}_{-i} \right ) \geq \overline{\boldsymbol{U}}_i \left ( S_i \setminus s_i, \boldsymbol{\sigma}_{-i} \right ) \Big ), \ \forall s_i\in S_i, s_i' \in S_i', \forall i \in [N],\\
        \\
        \sum_{s_i \in S_i} \sigma_i(s_i) = 1, \forall i \in [N],\\
        \\
        \sigma_i(s_i) = 0, \forall s_i \notin S_i', \forall i \in [N].
    \end{array}
    \right.
\end{equation}
Any solution to Equation set \eqref{eq_find_mixed_strategy_RAE} is a risk-averse equilibrium, so we can check if an equilibrium exists for any support $\boldsymbol{S}'\subseteq \boldsymbol{\Sigma}$.

Note that as is stated in Equation \eqref{eq_proof_strict_dominance}, we have the following by using the law of total probability:
\begin{equation}
\label{prob_matrix}
\begin{aligned}
    & P \Big ( \overline{U}_i \left ( s_i, \boldsymbol{\sigma}_{-i} \right ) \geq \overline{\boldsymbol{U}}_i \left ( S_i \setminus s_i, \boldsymbol{\sigma}_{-i} \right ) \Big ) \\
    = & \sum_{s_{-i} \in \boldsymbol{S}_{-i}} \Bigg (  \boldsymbol{\sigma}(\boldsymbol{s}_{-i}) \cdot P \Big (U_i(s_i, \boldsymbol{s}_{-i}) \geq \boldsymbol{U}_i(S_i \setminus s_i, \boldsymbol{s}_{-i}) \Big )  \Bigg ),
\end{aligned}
\end{equation}
where $\boldsymbol{\sigma}(\boldsymbol{s}_{-i}) =  \prod_{j \in [N] \setminus i} \sigma_j (s_j)$ and $s_j$ is the corresponding strategy of player $j$ in $\boldsymbol{s}_{-i}$. Hence, Equation \eqref{prob_matrix} is polynomial of order $N - 1$ in terms of $\sigma(s_i)$ for $s_i \in S_i$ and $i \in [N]$.
We can define a \textit{risk-averse probability tensor} of dimension $|S_1| \times |S_2| \times \dots \times |S_N|$, where the $i$-th dimension has all pure strategies $s_i \in S_i$ and each element of the tensor is an $N$ dimensional vector defined in the following. The $i$-th element of the $N$ dimensional vector corresponding to the pure strategy profile $(s_i, \boldsymbol{s}_{-i})$ is defined as
\begin{equation}
    \label{eq_risk_averse_probability_matrix}
    p_i(s_i, \boldsymbol{s}_{-i}) = P \Big ( U_i \left ( s_i, \boldsymbol{s}_{-i} \right ) \geq \boldsymbol{U}_i \left ( S_i \setminus s_i, \boldsymbol{s}_{-i} \right ) \Big ).
\end{equation}
As a result, an equivalent approach for finding the risk-averse equilibrium is to find the Nash equilibrium of the risk-averse probability tensor, as any such Nash equilibrium must maximize the probability of playing a utility-maximizing response to $\boldsymbol{\sigma}_{-i}$ for each player $i$.
In the following two subsections, two illustrative examples are provided to make the concept of the risk-averse equilibrium clear.

%In order to make this point clear, an example is provided in the following where the mixed strategy risk-averse equilibrium of the game is computed.

%Before proposing the mixed strategy risk-averse equilibrium, we present examples to make the intuition behind the risk-averse equilibrium clear.

\subsection{Illustrative Example 1}
The following example is presented to shed light on the notion of pure strategy risk-averse equilibrium.
\label{illustrative_example}
\begin{example}
\label{example1}
Consider a game between two players where each player has two pure strategies, $S_1 = \{U, D\}$ and $S_2 = \{L, R\}$, with independent payoff distributions specified as
\begin{enumerate}[label=(\roman*)]
    \item $U_1(U, L)$ and $U_2(U, L)$ are independent and have the same pdf as $f_4(u) = \alpha \Big ( 3e^{-20(u - 2)^2} \cdot \mathds{1}\{ \frac32 \leq u \leq \frac52 \} + 2e^{-20(u - 7)^2} \cdot \mathds{1}\{ \frac{13}{2} \leq u \leq \frac{15}{2} \} \Big )$,
    \item $U_1(U, R)$ and $U_2(U, R)$ are independent and have the same pdf as $f_3(u) = \beta e^{-20(u - 3)^2} \cdot \mathds{1}\{ \frac52 \leq u \leq \frac72 \}$,
    \item $U_1(D, L)$ and $U_2(D, L)$ are independent and have the same pdf as $\widehat{f}_3(u) = \gamma \Big ( 3e^{-20(u - 1)^2} \cdot \mathds{1}\{ \frac12 \leq u \leq \frac32 \} + 2e^{-20(u - 6)^2} \cdot \mathds{1}\{ \frac{11}{2} \leq u \leq \frac{13}{2} \} \Big )$,
    \item $U_1(D, R)$ and $U_2(D, R)$ are independent and have the same pdf as $f_5(u) = \delta \Big ( 7e^{-20(u - 2)^2} \cdot \mathds{1}\{ \frac32 \leq u \leq \frac52 \} + 3e^{-20(u - 12)^2} \cdot \mathds{1}\{ \frac{23}{2} \leq u \leq \frac{25}{2} \} \Big )$,
\end{enumerate}
where $\alpha, \beta, \gamma$, and $\delta$ are constants for which each of the corresponding distributions integrate to one and $\mathds{1}\{.\}$ is the indicator function.

\begin{figure}[t]
\centering
\includegraphics[width=0.75\textwidth]{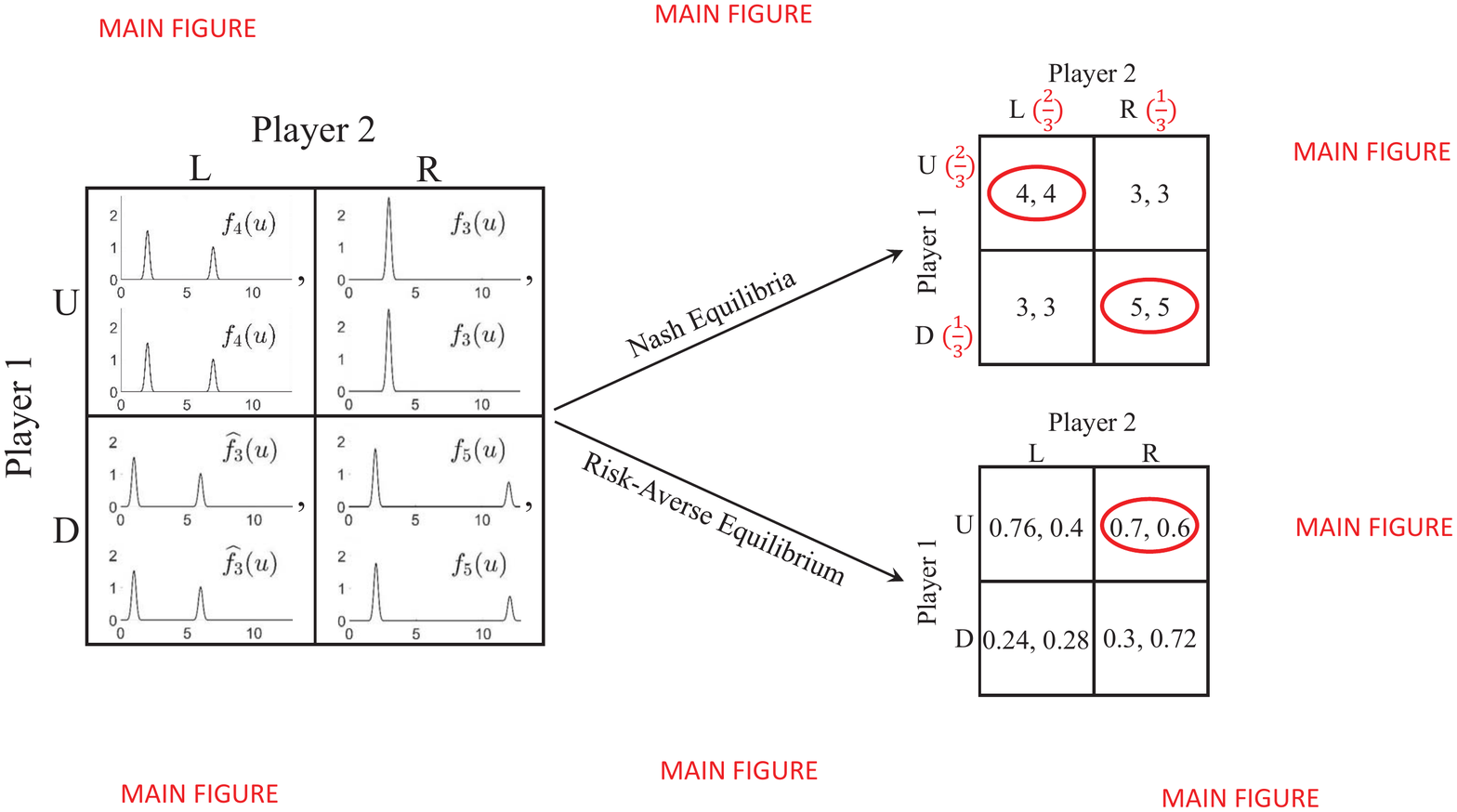}
\caption{The payoff matrix of Example \ref{example1}. The pure and mixed strategy Nash equilibria are shown on the top-right and the pure strategy risk-averse equilibrium is shown on the bottom-right.}
\label{figure_example1}
\end{figure}

\end{example}

The above example is depicted in Figure \ref{figure_example1}.
Considering the expected payoffs in Example \ref{example1}, $\mathds{E} \left [ U_1(U, L) \right ] = \mathds{E} \left [ U_2(U, L) \right ] = 4$, $\mathds{E} \left [ U_1(U, R) \right ] = \mathds{E} \left [ U_2(U, R) \right ] = \mathds{E} \left [ U_1(D, L) \right ] = \mathds{E} \left [ U_2(D, L) \right ] = 3$, and $\mathds{E} \left [ U_1(D, R) \right ] = \mathds{E} \left [ U_2(D, R) \right ] = 5$, the pure Nash equilibria of the game are $(U, L)$ and $(D, R)$, and the mixed Nash equilibrium is that the first player selects $U$ with probability two-thirds and selects $D$ otherwise and the second player selects $L$ with probability two-thirds and selects $R$ otherwise.
On the other hand, by using the payoff density functions we have $P \big (U_1(U, L) \geq U_1(D, L) \big ) = 0.76, P \big (U_1(U, R) \geq U_1(D, R) \big ) = 0.7, P \big (U_2(U, L) \geq U_2(U, R) \big ) = 0.4,$ and $P \big (U_2(D, L) \geq U_2(D, R) \big ) = 0.28$, which are used to form the risk-averse probability bi-matrix of the game derived based on Equation \eqref{eq_risk_averse_probability_matrix}. The risk-averse probability matrix is depicted in Figure \ref{figure_example1}. According to Definition \ref{def_pure_strategy_RAE}, $(U, R)$ is a pure strategy risk-averse equilibrium that is different from the Nash equilibria of the game.
Taking a close look at the payoff distributions, $(U, R)$ is less risky than $(U, L)$ and $(D, R)$ in a single round of the game.

\subsection{Illustrative Example 2}
\label{illustrative_example2}
In this subsection, the mixed strategy risk-averse equilibrium of a two-player game proposed in the following example is computed.
\begin{example}
\label{example2}
Consider a game between two players where each player has two pure strategies, $S_1 = \{U, D\}$ and $S_2 = \{L, R\}$, with independent payoff distributions specified as
\begin{enumerate}[label=(\roman*)]
    \item $U_1(U, L)$ and $U_2(U, L)$ are independent and have the same pdf as $f_4(u) = \alpha \Big ( 3e^{-20(u - 2)^2} \cdot \mathds{1}\{ \frac32 \leq u \leq \frac52 \} + 2e^{-20(u - 7)^2} \cdot \mathds{1}\{ \frac{13}{2} \leq u \leq \frac{15}{2} \} \Big )$,
    \item $U_1(U, R), U_2(U, R), U_1(D, L)$, and $U_2(D, L)$ are independent and have the same pdf as $f_3(u) = \beta e^{-20(u - 3)^2} \cdot \mathds{1}\{ \frac52 \leq u \leq \frac72 \}$,
    \item $U_1(D, R)$ and $U_2(D, R)$ are independent and have the same pdf as $f_1(u) = \gamma e^{-20(u - 1)^2} \cdot \mathds{1}\{ \frac12 \leq u \leq \frac32 \}$,
\end{enumerate}
where $\alpha, \beta,$ and $\gamma$ are constants for which each of the corresponding distributions integrate to one.
\end{example}

\begin{figure}[t]
\centering
\includegraphics[width=0.75\textwidth]{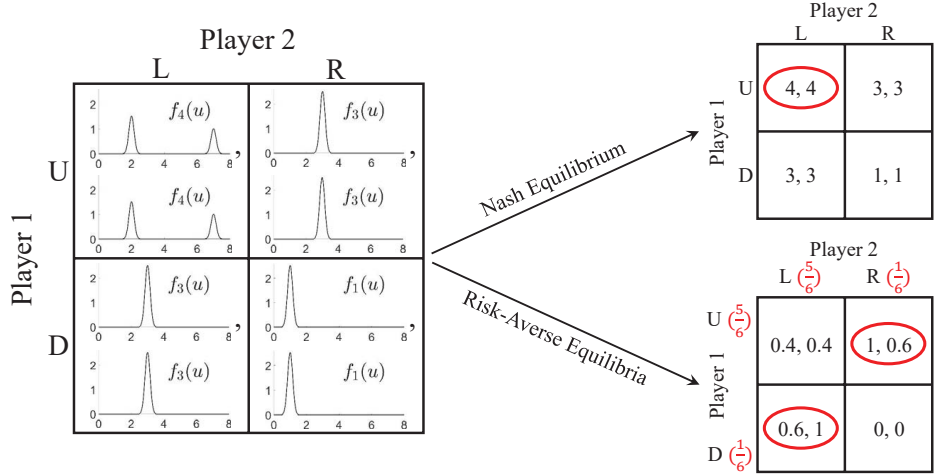}
\caption{The payoff matrix of Example \ref{example2}. The pure strategy Nash equilibrium is shown on the top-right and the pure and mixed strategy risk-averse equilibria are shown on the bottom-right.}
\label{figure_example2}
\end{figure}

The above example is depicted in Figure \ref{figure_example2}.
Considering the expected payoffs in Example \ref{example2}, $\mathds{E} \left [ U_1(U, L) \right ] = \mathds{E} \left [ U_2(U, L) \right ] = 4$, $\mathds{E} \left [ U_1(U, R) \right ] = \mathds{E} \left [ U_2(U, R) \right ] = \mathds{E} \left [ U_1(D, L) \right ] = \mathds{E} \left [ U_2(D, L) \right ] = 3$, and $\mathds{E} \left [ U_1(D, R) \right ] = \mathds{E} \left [ U_2(D, R) \right ] = 1$, the pure Nash equilibrium of the game is $(U, L)$ as depicted in Figure \ref{figure_example2} with no mixed strategy Nash equilibria.
On the other hand, by using the payoff density functions we have $P \big (U_1(U, L) \geq U_1(D, L) \big ) = 0.4, P \big (U_1(U, R) \geq U_1(D, R) \big ) = 1, P \big (U_2(U, L) \geq U_2(U, R) \big ) = 0.4,$ and $P \big (U_2(D, L) \geq U_2(D, R) \big ) = 1$, which are used to form the risk-averse probability bi-matrix of the game derived based on Equation \eqref{eq_risk_averse_probability_matrix}. The risk-averse probability matrix is depicted in Figure \ref{figure_example2}. According to Definition \ref{def_pure_strategy_RAE}, $(U, R)$ and $(D, L)$ are the pure strategy risk-averse equilibria.
In order to find the mixed strategy risk-averse equilibrium, consider that the first player selects $U$ with probability $\sigma_U$ and selects $D$ otherwise. Given the first player's mixed strategy $(\sigma_U, 1 - \sigma_U)$, with a little misuse of notation, denote the random variables denoting the second player's payoffs by selecting $L$ or $R$ with $L$ and $R$, respectively.
%As a result, for two independent games, where in both of them the first player independently plays according to the mixed strategy $(\sigma_U, 1- \sigma_U)$ and the second player selects $L$ and $R$ in the first and second games, respectively, using the law of total probability,
%\begin{equation}
%\label{eq_mixed_RAE_example2}
%    \begin{aligned}
%        & L \sim f_L(u) = \sigma_U \cdot f_4(u) + (1 - \sigma_U) \cdot f_3(u),\\
%        & R \sim f_R(v) = \sigma_U \cdot f_3(v) + (1 - \sigma_U) \cdot f_1(v).
%    \end{aligned}
%\end{equation}
The second player is indifferent between selecting $L$ and $R$ if $P(L \geq R) = P(R \geq L)$. Since payoffs are continuous random variables, $P(R \geq L) = 1 - P(L \geq R)$; as a result, the second player is indifferent between the strategies if $P(L \geq R) = 0.5$.
By using the law of total probability and independence of payoff distributions, $P(L \geq R)$ can be computed as
\begin{equation}
    \begin{aligned}
        P(L \geq R) & = \sigma_U \cdot P \Big ( U_2(U, L) \geq U_2(U, R) \Big | \text{Player $1$ plays $U$} \Big ) \\
        & \ \ \ + (1 - \sigma_U) \cdot P \Big ( U_2(D, L) \geq U_2(D, R) \Big | \text{Player $1$ plays $D$} \Big ) \\
        & = \sigma_U \cdot \int_{-\infty}^\infty \int_v^\infty f_4(u) \cdot f_3(v) \ dudv + (1 - \sigma_U) \cdot \int_\infty^\infty \int_v^\infty f_3(u) \cdot f_1(v) \ dudv \\
        & = \frac25\sigma_U + (1 - \sigma_U) = 1 - \frac35\sigma_U.
    \end{aligned}
\end{equation}
Letting $P(L \geq R) = 0.5$, then $\sigma_U = \frac56$, which determines the mixed strategy risk-averse equilibrium. As a result, due to symmetry, $\big (\sigma_1(U), \sigma_1(D) \big ) = (\frac56, \frac16)$ and $\big (\sigma_2(L), \sigma_2(R) \big ) = (\frac56, \frac16)$ form the mixed strategy risk-averse equilibrium of the game in Example \ref{example2}.

It is easy to verify that the game proposed in Example \ref{example1} does not have any mixed strategy risk-averse equilibria. The game in Example \ref{example1} has both pure and mixed strategy Nash equilibria, but it only has pure strategy risk-averse equilibrium. On the other hand, the game in Example \ref{example2} only has pure strategy Nash equilibrium, but it has both pure and mixed risk-averse equilibria. As can be seen, the distributions of payoffs can have a significant impact on the behavior of players if they take risk into account when taking their decisions.

\begin{remark}
As mentioned earlier in this section, the analysis for risk-averse equilibrium holds for discrete-time random variables as well. For example, consider random variables $X, Y$, and $Z$ with distributions
\[
\begin{aligned}
    & P(X = 1) = 0.8, \ P(X = 2) = 0.2, \\
    & P(Y = 1) = 1, \\
    & P(Z = 1) = 0.5, \ P(Z = 2) = 0.5.
\end{aligned}
\]
Denote the observations of the three random variables by triple $(X, Y, Z)$ and let $\{X \geq (Y, Z) \}$ be the event that $X$ is greater than or equal to both $Y$ and $Z$. Then
\[
\begin{aligned}
    & P \big (X \geq (Y, Z) \big ) = P \big ( \{ (1, 1, 1), (2, 1, 1), (2, 1, 2) \} \big ) = 0.4 + 0.1 + 0.1 = 0.6, \\
    & P \big ( Y \geq (X, Z) \big ) = P \big ( (1, 1, 1) \big ) = 0.4, \\
    & P \big ( Z \geq (X, Y) \big ) = P \big ( \{ (1, 1, 1), (1, 1, 2), (2, 1, 2) \} \big ) = 0.4 + 0.4 + 0.1 = 0.9.
\end{aligned}
\]
As can be seen, $P \big (X \geq (Y, Z) \big ) + P \big ( Y \geq (X, Z) \big ) + P \big ( Z \geq (X, Y) \big ) = 1.9 > 1$. In order to resolve this issue, we can break ties uniformly at random as
\[
\begin{aligned}
    & P \big (X \geq (Y, Z) \big ) = \frac13 \times 0.4 + 0.1 + \frac12 \times 0.1 = \frac{17}{60}, \\
    & P \big ( Y \geq (X, Z) \big ) = \frac13 \times 0.4 = \frac{2}{15}, \\
    & P \big ( Z \geq (X, Y) \big ) = \frac13 \times 0.4 + 0.4 + \frac12 \times 0.1 = \frac{35}{60},
\end{aligned}
\]
which results in $P \big (X \geq (Y, Z) \big ) + P \big ( Y \geq (X, Z) \big ) + P \big ( Z \geq (X, Y) \big ) = 1$.
%For example, consider two independent and identically distributed Bernoulli random variables, $B_1$ and $B_2$, with parameter $p$. Then, $P(B_1 \geq B_2) = $
\end{remark}

\section{Finite-Time Commit Games}
The risk-averse framework discussed in Section \ref{section_risk_averse_equilibrium} provides risk-averse players with pure or mixed strategies such that given the other players' strategies, risk-averse equilibrium maximizes the probability that a player is rewarded the most in a single round of the game rather than maximizing the expected received reward. %and no player has any incentive to deviate from the equilibrium in this risk-averse sense.
%In a single round of playing the game, risk-averse players select the risk-averse equilibrium to maximize the probability of receiving the highest reward rather than maximizing the expectation of the reward they receive.
On the other hand, for infinite rounds of playing the game, given the other players' strategies, selecting the strategy that maximizes the expected reward guarantees maximum cumulative reward. However, the rewards may not be satisfying for a risk-averse player in each and every round of playing the game.
As a result, risk-averse players may even choose to play the risk-averse equilibrium in infinite (or finite) rounds of games to have more or less balanced rewards in all rounds of the game rather than have maximum cumulative reward in the end.
Despite this fact, we present a slightly different approach for finite-time games that aims to maximize not the expected cumulative reward but rather the probability of receiving the highest cumulative reward.
Note that the proposed equilibrium for finite-time commit games in this section may be different from the Nash equilibrium or the equilibrium presented in Section \ref{section_risk_averse_equilibrium}.
%We will show that the equilibrium presented in this section for a finite-time game is the same as the Nash equilibrium if the number of times that the game is played approaches infinity and by construction the equilibrium for the finite-time game is the same as the equilibrium presented in Section \ref{section_risk_averse_equilibrium} if the game is played once.
%In the next two subsections, two different setups are discussed for finite-time games.

%\subsection{Finite-Time Commit Games}
\label{section_commit_games}
Consider that the $N$ players plan to play a game for $M$ independent times where all players have to commit to the pure strategy they play in the first round for the whole game. The strategy in the first round of the game does not have to be pure and can be mixed, but a player has to commit to the randomly sampled pure strategy according to the mixed strategy for $M$ times.
Let $U_i^M(s_i, \boldsymbol{s}_{-i}) = U_i^M(\boldsymbol{s}) = X^1 + X^2 + \dots + X^M$, where $X^j$ for $1 \leq j \leq M$ are independent and identically distributed random variables and $X^1 \sim f_i(x | \boldsymbol{s})$. If players choose to play $\boldsymbol{s} \in \boldsymbol{S}$ for the whole game with $M$ rounds, the random variable $U_i^M(\boldsymbol{s})$ denotes the cumulative payoff for player $i \in [N]$ in the end of the $M$ plays and $U_i^M(\boldsymbol{s}) \sim f_i^M(x | \boldsymbol{s}) = \underbrace{f_i(x | \boldsymbol{s}) \circledast \dots \circledast f_i(x | \boldsymbol{s})}_\text{$M$ times}$.

%\textcolor{red}{Proposal: We have $X^i$ and $X^j$ not necessarily be identically distributed or even independent (i.e. Markovian?). Allows for a more interesting comparison of the behavior between commit and non-commit games, and analysis for 4.1 doesn't change.}

If the $[N] \setminus i$ players choose to play a mixed strategy $\boldsymbol{\sigma}_{-i}$ in the first round of the game and commit to it for $M-1$ other rounds of the game, using the law of total probability, the distribution of the cumulative payoff for player $i$ in the end of the game when he/she plays $s_i$, denoted by $\overline{U}_i^M(s_i, \boldsymbol{\sigma}_{-i})$, has the probability distribution function
\begin{equation}
    \label{mixed_distribution_M}
    \bar{f}_i^M(x | (s_i, \boldsymbol{\sigma}_{-i})) = \sum_{\boldsymbol{s}_{-i} \in \boldsymbol{S}_{-i}} \Bigg ( f_i^M(x | (s_i, \boldsymbol{s}_{-i})) \cdot \boldsymbol{\sigma}(\boldsymbol{s}_{-i}) \Bigg ),
\end{equation}
where $\boldsymbol{\sigma}(\boldsymbol{s}_{-i}) =  \prod_{j \in [N] \setminus i} \sigma_j (s_j)$ and $s_j$ is the corresponding strategy of player $j$ in $\boldsymbol{s}_{-i}$.
Note that for $s_i, s_i' \in S_i$, the random variables $\overline{U}_i^M(s_i, \boldsymbol{\sigma}_{-i})$ and $\overline{U}_i^M(s_i', \boldsymbol{\sigma}_{-i})$ are not independent from each other in a single play of the game.
The risk-averse equilibrium for an $M$-time commit game can be derived similarly to the derivations in Section \ref{section_risk_averse_equilibrium} and is described below.
From an individual player's point of view, the best response to a mixed strategy of the rest of the players for an $M$-time commit game is defined
%similarly to Definition \ref{def_best_response_M}
as follows.

\begin{definition}
\label{def_best_response_mixed_M_commit}
The set of mixed strategy risk-averse best responses of player $i$ to the mixed strategy profile $\boldsymbol{\sigma}_{-i}$ for an $M$-time commit game is
%the set of mixed strategies $\hat{\sigma}_i \in \Sigma_i$ that are
the set of all probability distributions over the set
\begin{equation}
    \label{eq_mixed_best_response_M_commit}
    \argmax_{s_i \in S_i} P \Big ( \overline{U}_i^M \left ( s_i, \boldsymbol{\sigma}_{-i} \right ) \geq \overline{\boldsymbol{U}}_i^M \left ( S_i \setminus s_i, \boldsymbol{\sigma}_{-i} \right ) \Big ),
\end{equation}
where what we mean by $\overline{U}_i^M \left ( s_i, \boldsymbol{\sigma}_{-i} \right )$ being greater than or equal to $\overline{\boldsymbol{U}}_i^M \left ( S_i \setminus s_i, \boldsymbol{\sigma}_{-i} \right )$ is that  $\overline{U}_i^M \left ( s_i, \boldsymbol{\sigma}_{-i} \right )$ is greater than or equal to $\overline{U}_i^M \left ( s_i', \boldsymbol{\sigma}_{-i} \right )$ for all $s_i' \in S_i \setminus s_i$; otherwise, if $S_i \setminus s_i = \oldemptyset$, player $i$ only has a single option that has to play.
We denote the risk-averse best response set of player $i$'s mixed strategies for an $M$-time commit game, given the other players' mixed strategies $\boldsymbol{\sigma}_{-i}$, by $RB^M(\boldsymbol{\sigma}_{-i})$, which is a set-valued function.
%(overloading notation, $BR^M(.)$ is used for both pure and mixed strategy risk-averse best response for $M$-time commit games).
\end{definition}

Given the definition of the risk-averse best response for $M$-time commit games, the risk-averse equilibrium (RAE) for $M$-time commit games is defined
%similarly to Definition \ref{def_pure_strategy_RAE_M}
as follows.

\begin{definition}
\label{def_mixed_strategy_RAE_M_commit}
A strategy profile $\boldsymbol{\sigma}^{*M} = (\sigma_1^{*M}, \sigma_2^{*M}, \dots,$ $\sigma_N^{*M})$ is a risk-averse equilibrium (RAE) for an $M$-time commit game, if and only if $\sigma_i^{*M} \in RB^M(\boldsymbol{\sigma}_{-i}^{*M})$ for all $i \in [N]$.
\end{definition}

The following corollary is resulted directly from Theorem \ref{theorem_existence}.

\begin{corollary}
For any finite $N$-player finite-time commit game, a risk-averse equilibrium exists.
\end{corollary}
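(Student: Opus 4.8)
The plan is to argue that an $M$-time commit game is simply an instance of the ordinary finite $N$-player game treated in Theorem \ref{theorem_existence}, only with a different payoff random variable, so that the existence result can be invoked as a black box rather than re-run through Kakutani's theorem. First I would observe that in an $M$-time commit game each player $i$ still chooses from the same finite pure strategy set $S_i$ and commits to it for all $M$ rounds; hence the pure strategy profile space $\boldsymbol{S}$ and the mixed strategy simplex $\boldsymbol{\Sigma}$ are identical to those of the single-shot game. The only modification is that player $i$'s payoff under a pure profile $\boldsymbol{s}$ becomes the cumulative payoff $U_i^M(\boldsymbol{s}) = X^1 + \dots + X^M$ with density $f_i^M(x \mid \boldsymbol{s})$, the $M$-fold convolution of $f_i(x \mid \boldsymbol{s})$.

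Second, I would verify that these cumulative payoffs satisfy exactly the hypotheses placed on payoffs in Section \ref{problem_statement}, which is all the proof of Theorem \ref{theorem_existence} actually uses. The density $f_i^M$ is a convolution of continuous densities and is therefore itself continuous, so $U_i^M(\boldsymbol{s})$ is again a continuous-type random variable; this is the one point requiring attention, since the closed-graph step of the Kakutani argument relied on payoffs being continuous so that the best-response probabilities vary continuously in $\boldsymbol{\sigma}_{-i}$. Moreover, because each $U_i^M(\boldsymbol{s})$ is built from independent copies of $U_i(\boldsymbol{s})$ and the family $\{U_i(\boldsymbol{s})\}$ is independent across players and pure profiles, the family $\{U_i^M(\boldsymbol{s})\}$ remains mutually independent across $i$ and $\boldsymbol{s}$, as required.

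Third, I would point out that Definitions \ref{def_best_response_mixed_M_commit} and \ref{def_mixed_strategy_RAE_M_commit} are obtained from Definitions \ref{def_best_response_mixed} and \ref{def_mixed_strategy_RAE} verbatim, with $U_i$ and $\overline{U}_i$ replaced by $U_i^M$ and $\overline{U}_i^M$. Consequently the correspondence $\boldsymbol{RB}^M$ is precisely the risk-averse best-response correspondence of the finite $N$-player game whose payoff random variables are the cumulative payoffs $U_i^M(\boldsymbol{s})$, and an RAE of the $M$-time commit game is exactly a fixed point $\boldsymbol{\sigma}^{*M} \in \boldsymbol{RB}^M(\boldsymbol{\sigma}^{*M})$. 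Applying Theorem \ref{theorem_existence} to this derived game yields such a fixed point and establishes the corollary. The only substantive obstacle is confirming that convolution preserves the continuous-type assumption, and this is immediate; every other Kakutani condition, namely the nonemptiness, compactness, and convexity of $\boldsymbol{\Sigma}$ together with the nonemptiness and convexity of the best-response sets, is identical to the single-shot case and does not depend on the specific payoff distributions.
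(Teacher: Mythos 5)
Your proposal is correct and matches the paper's approach: the paper simply states that the corollary "results directly from Theorem \ref{theorem_existence}," i.e., the $M$-time commit game is the same finite game with payoffs replaced by the cumulative sums $U_i^M(\boldsymbol{s})$, exactly as you argue. Your additional checks (convolution preserves the continuous-type assumption, independence is preserved, and the strategy simplices are unchanged) are the right details to justify that one-line reduction.
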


The pure strategy risk-averse best response for an $M$-time commit game is defined in the following as a specific case of the risk-averse best response defined in Definition \ref{def_best_response_mixed_M_commit}.

\begin{definition}
\label{def_best_response_M}
Pure strategy $\widehat{s}_i$ of player $i$ is a risk-averse best response (RB) to the pure strategy $\boldsymbol{s}_{-i}$ of the other players for an $M$-time commit game if
\begin{equation}
    \label{pure_best_response_M}
    \left\{
    \begin{array}{ll}
        \widehat{s}_i \in \argmax_{s_i \in S_i} P \Big ( U_i^M \left ( s_i, \boldsymbol{s}_{-i} \right ) \geq \boldsymbol{U}_i^M \left ( S_i \setminus s_i, \boldsymbol{s}_{-i} \right ) \Big ), \ \ \ \text{ if } S_i \setminus s_i \neq \oldemptyset,
        \\
        \widehat{s}_i = s_i, \ \ \  \text{ if } S_i \setminus s_i = \oldemptyset,
    \end{array}
    \right.
\end{equation}
where what we mean by $U_i^M \left ( s_i, \boldsymbol{s}_{-i} \right )$ being greater than or equal to $\boldsymbol{U}_i^M \left ( S_i \setminus s_i, \boldsymbol{s}_{-i} \right )$ is that $U_i^M \left ( s_i, \boldsymbol{s}_{-i} \right )$ is greater than or equal to $U_i^M \left ( s_i', \boldsymbol{s}_{-i} \right )$ for all $s_i' \in S_i \setminus s_i$. We denote the risk-averse best response set of player $i$ for an $M$-time commit game, given the other players' pure strategies $\boldsymbol{s}_{-i}$, by $RB^M(\boldsymbol{s}_{-i})$ (overloading notation, $BR^M(.)$ is used for both mixed and pure strategy risk-averse best response for $M$-time commit games).
\end{definition}

Given the definition of the pure strategy risk-averse best response for an $M$-time commit game, the pure strategy risk-averse equilibrium (RAE), which does not necessarily exist, is defined below.

\begin{definition}
\label{def_pure_strategy_RAE_M}
A pure strategy profile $\boldsymbol{s}^{*M} = (s_1^{*M}, s_2^{*M}, \dots,$ $s_N^{*M})$ is a pure strategy risk-averse equilibrium (RAE) for an $M$-time commit game, if and only if $s_i^{*M} \in RB^M(\boldsymbol{s}_{-i}^{*M})$ for all $i \in [N]$.
\end{definition}

\section{Numerical Results}
\label{numerical_results}
In this section, the classical Nash equilibrium is compared with the proposed risk-averse equilibrium. To this end, the likelihood of receiving the higher reward in a two-player game is evaluated under the two types of equilibria for the following example.

\begin{example}
\label{example4}
Consider a game between two players where each player has two pure strategies, $S_1 = \{U, D\}$ and $S_2 = \{L, R\}$, with independent payoff distributions specified as
\begin{enumerate}[label=(\roman*)]
    \item $U_1(U, L)$ and $U_2(U, L)$ are independent and have the same pdf as $f_{1, 1}(u) = \alpha \Big ( 3e^{-20(u - 1)^2} \cdot \mathds{1}\{ \frac12 \leq u \leq \frac32 \} + 2e^{-20(u - a)^2} \cdot \mathds{1}\{ a - \frac12 \leq u \leq a + \frac12 \} \Big )$,
    \item $U_1(U, R)$, $U_2(U, R)$, $U_1(D, L)$, and $U_2(D, L)$ are independent and have the same pdf as $f_{1, 2}(u) = \beta e^{-20(u - 3)^2} \cdot \mathds{1}\{ \frac52 \leq u \leq \frac72 \}$,
    \item $U_1(D, R)$ and $U_2(D, R)$ are independent and have the same pdf as $f_{2, 2}(u) = \gamma \Big ( 7e^{-20(u - 2)^2} \cdot \mathds{1}\{ \frac32 \leq u \leq \frac52 \} + 3e^{-20(u - a - 2)^2} \cdot \mathds{1}\{ a + \frac32 \leq u \leq a + \frac52 \} \Big )$,
\end{enumerate}
where $\alpha, \beta$, and $\gamma$ are constants for which each of the corresponding distributions integrate to one, $a \geq 0$ is a constant, and $\mathds{1}\{.\}$ is the indicator function.

%\begin{figure}[t]
%\centering
%\includegraphics[width=0.75\textwidth]{Example11}
%\caption{The payoff matrix of Example \ref{example1}. The pure and mixed strategy Nash equilibria are shown on the top-right and the pure strategy risk-averse equilibrium is shown on the bottom-right.}
%\label{figure_example1}
%\end{figure}
\end{example}

The Nash equilibrium in the above example depends on the value of the constant $a$. If $0 \leq a < 3.333$, pure strategies $(U, R)$ and $(D, L)$ are the two pure Nash equilibria of the game in addition to a mixed Nash equilibrium. If $3.333 \leq a \leq 6$, the pure strategy $(D, R)$ is the only Nash equilibrium of the game. If $a > 6$, pure strategies $(U, L)$ and $(D, R)$ are the two pure Nash equilibria of the game in addition to a mixed Nash equilibrium.
On the other hand, no matter what the value of the constant $a$ is, the game has a mixed risk-averse equilibrium as well as two pure risk-averse equilibria, which are $(U, R)$ and $(D, L)$.
The mixed strategy Nash equilibrium and the mixed strategy risk-averse equilibrium depend on the value of the constant $a$. Note that the game is symmetric from the perspective of the two players, so the mixed strategy Nash and risk-averse equilibria are characterized by $\sigma_1(U)$ that is plotted in Figure \ref{figure_example4}.

\begin{figure}[t]
\centering
\includegraphics[width=0.7\textwidth]{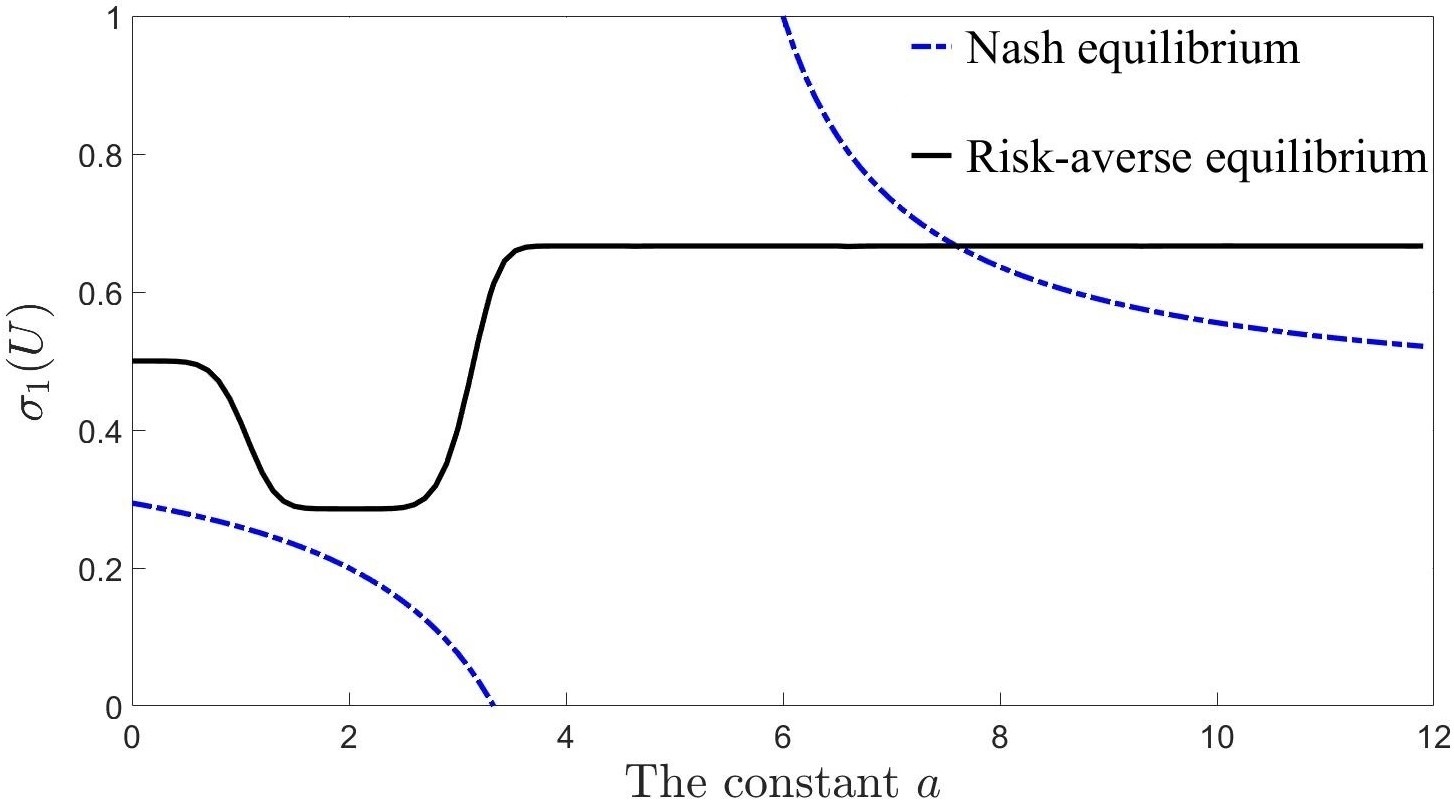}
\caption{The mixed strategy Nash and risk-averse equilibria are determined by the value of $\sigma_1(U)$ in Example \ref{example4}. The mixed strategies depend on the value of the constant $a$, where $\sigma_1(U)$ is plotted above as a function of the constant $a$.}
\label{figure_example4}
\end{figure}

A game according to Example \ref{example4} is simulated for $10^6$ rounds for a fixed constant $a$. In each realization of the game, both Nash and risk-averse equilibria are played and their corresponding payoffs are compared for one of the players to see which one is larger. The mixed strategies under Nash and risk-averse equilibria are compared against each other and the pure strategies under Nash and risk-averse equilibria, if different, are compared against each other.
After the $10^6$ games, the proportion of the games in which playing according to the risk-averse equilibrium outperforms playing according to the Nash equilibrium by having a larger payoff is computed and
%the empirical probability of having the larger payoff is calculated and
plotted in Figure \ref{result} as a function of the constant $a$.
In Figure \ref{result}, the plots for comparison of the Nash and risk-averse mixed strategies are dotted lines, the plot for comparison of the Nash equilibrium $(D, R)$ and risk-averse pure strategy $(U, R)$ or $(D, L)$ is a solid line for $a > 3.333$, and the plot for comparison of the Nash equilibrium $(U, L)$ and risk-averse pure strategy $(U, R)$ or $(D, L)$ is a dash-dotted line for $a > 3.333$. Note that the payoff distributions are the same for pure risk-averse equilibria (PRAE) $(U, R)$ and $(D, L)$, that is why pure Nash equilibria $(D, R)$ (PNE1) and $(U, L)$ (PNE2) are compared with one of the PRAEs.
For example, the mixed risk-averse equilibrium is compared against the mixed Nash equilibrium for $0 \leq a < 3.333$, but the pure strategies under both approaches are the same in this interval; that is why the pure strategies are not compared for $0 \leq a < 3.333$.
%The comparison of the mixed equilibria is plotted with dash-dot lines, while the comparison of the pure stratgies is plotted with solid lines.
As shown in the figure, it is more likely to receive a higher payoff under the risk-averse equilibrium than under the Nash equilibrium in a single play of the game.

\begin{figure}[t]
\centering
\includegraphics[width=0.7\textwidth]{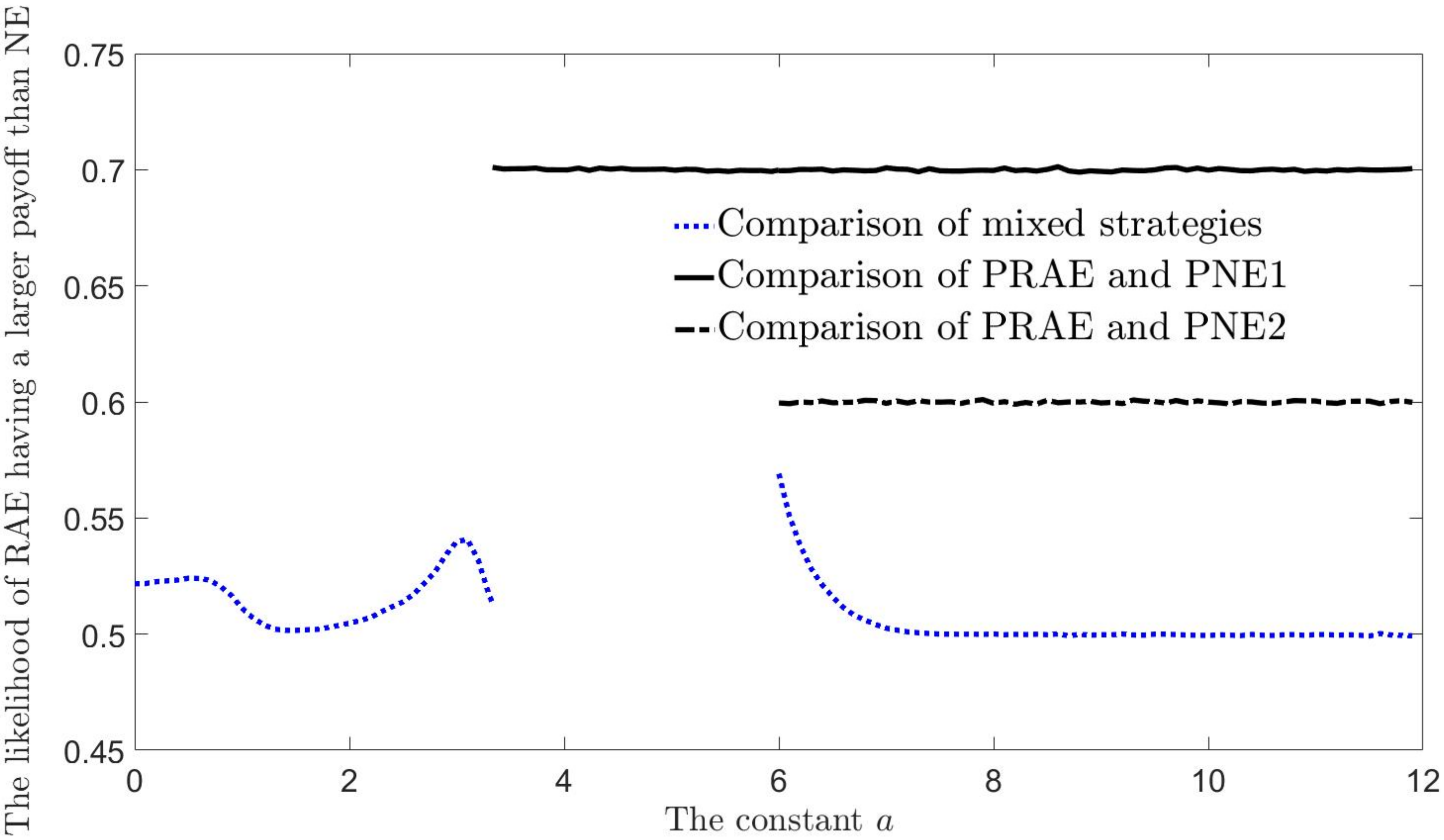}
\caption{The likelihood of the payoff of the risk-averse equilibrium being greater than the payoff of the Nash equilibrium.}
\label{result}
\end{figure}

\section{Conclusion and Future Work}
\label{conclusion_future}

We have proposed the risk-averse equilibrium (RAE) for stochastic games as a method for examining one-shot or limited-run games with stochastic payoffs.
In such a setting, it makes sense to consider players who want to maximize not the expected payoff but rather the likelihood of receiving the largest payoff.
In doing so, we draw parallels to prospect theory in economic decisions, where consumers prefer an option with lower variance at the cost of lower expected utility, rather than an option with higher expected utility at the cost of higher variance when facing significant decisions.
We then propose the risk-averse equilibrium to address one-shot games in such a situation and show it to exist in any $N$-player finite stochastic game.
We prove the existence of the risk-averse equilibrium independent of Nash equilibrium along with familiar concepts such as strategy dominance. We also define a probability tensor and show that the risk-averse equilibria of a game are equivalent to the Nash equilibria of such a tensor.
We next considered the risk-averse equilibrium in limited-run games by examining $M$-time commit
%/non-committed
games,
%In an $M$-time commit game,
where players commit to a strategy for the $M$ rounds of the stochastic game.
%, while in an $M$-time non-committed game, players are allowed to declare a mixture of strategies announced in the beginning of the game for the $M$  game rounds.

%Following our theoretical proposals and analysis we next numerically compare the values of Nash and risk-averse equilibria. We considered two-player games, focusing specifically on games where payoffs are drawn from bi-modal distributions. Our results show the RAE generates greater utility than a Nash equilibria with high probability in such one-shot games, which supports the RAE as a solution concept for competitive settings in which players are trying to make a ``good" decision.

Looking forward, the risk-averse equilibrium allows competition to be incorporated into many traditional risk-averse settings. Election modeling is one such example with a limited-run of interactions between candidates, and each candidate wants to maximize not the expected votes they receive, but rather their probability of winning. This is one of the drawbacks to the widely used Hotelling-Downs model (\cite{downs1957}), which assumes candidates merely want to maximize their expected voter share. By instead maximizing their probability of making the best response to other candidates, the risk-averse equilibrium will be able to address this shortcoming while offering similar interpetability and insight.
%Market competition is an obvious example, in which interactions (though rarely one-shot) are not based solely on utility maximization: too many high-risk decisions can quickly leave a firm bankrupt and so there is a balance 

\bibliographystyle{informs2014trsc}
\bibliography{myrefs}

\begin{thebibliography}{21}
\providecommand{\natexlab}[1]{#1}
\providecommand{\url}[1]{\texttt{#1}}
\providecommand{\urlprefix}{URL }

\bibitem[{Baele et~al.(2018)Baele, Driessen, Ebert, Londono, \protect\BIBand{}
  Spalt}]{baele2018}
Baele L, Driessen J, Ebert S, Londono JM, Spalt OG, 2018 \emph{{Cumulative
  Prospect Theory, Option Returns, and the Variance Premium}}. \emph{The Review
  of Financial Studies} 32(9):3667--3723,
  \urlprefix\url{http://dx.doi.org/10.1093/rfs/hhy127}.

\bibitem[{Barberis, Huang, \protect\BIBand{} Santos(2001)}]{barberis2001}
Barberis N, Huang M, Santos T, 2001 \emph{{Prospect Theory and Asset Prices*}}.
  \emph{The Quarterly Journal of Economics} 116(1):1--53,
  \urlprefix\url{http://dx.doi.org/10.1162/003355301556310}.

\bibitem[{Barberis, Mukherjee, \protect\BIBand{} Wang(2016)}]{barberis2016}
Barberis N, Mukherjee A, Wang B, 2016 \emph{{Prospect Theory and Stock Returns:
  An Empirical Test}}. \emph{The Review of Financial Studies}
  29(11):3068--3107, \urlprefix\url{http://dx.doi.org/10.1093/rfs/hhw049}.

\bibitem[{Downs(1957)}]{downs1957}
Downs A, 1957 \emph{An economic theory of political action in a democracy}.
  \emph{Journal of Political Economy} 65(2):135--150,
  \urlprefix\url{http://www.jstor.org/stable/1827369}.

\bibitem[{Goeree, Holt, \protect\BIBand{} Palfrey(2003)}]{goeree2003}
Goeree JK, Holt CA, Palfrey TR, 2003 \emph{Risk averse behavior in generalized
  matching pennies games}. \emph{Games and Economic Behavior} 45(1):97 -- 113,
  \urlprefix\url{http://dx.doi.org/https://doi.org/10.1016/S0899-8256(03)00052-6},
  first World Congress of the Game Theory Society.

\bibitem[{Harsanyi(1967)}]{harsanyi1967}
Harsanyi JC, 1967 \emph{Games with incomplete information played by
  ``bayesian'' players, i-iii part i. the basic model}. \emph{Management
  Science} 14, \urlprefix\url{https://doi.org/10.1287/mnsc.14.3.159}.

\bibitem[{Kahneman \protect\BIBand{}
  Tversky(1979)}]{kahneman1979prospecttheory:}
Kahneman D, Tversky A, 1979 \emph{Prospect theory: An analysis of decision
  under risk}. \emph{Econometrica} 263--292.

\bibitem[{Levy(1992)}]{levy1992politic_prospect}
Levy JS, 1992 \emph{An introduction to prospect theory}. \emph{Political
  Psychology} 13(2):171--186,
  \urlprefix\url{http://www.jstor.org/stable/3791677}.

\bibitem[{List(2004)}]{list2004}
List JA, 2004 \emph{Neoclassical theory versus prospect theory: Evidence from
  the marketplace}. \emph{Econometrica} 72(2):615--625,
  \urlprefix\url{http://dx.doi.org/10.1111/j.1468-0262.2004.00502.x}.

\bibitem[{McCallion(2014)}]{mccallion2014}
McCallion J, 2014 \emph{Amazon loses $\$$170 million on fire phone}.
  \emph{Alphr}
  \urlprefix\url{https://www.alphr.com/news/391360/amazon-loses-170-million-on-fire-phone}.

\bibitem[{Mertikopoulos \protect\BIBand{} Zhou(2019)}]{mertikopoulos2019}
Mertikopoulos P, Zhou Z, 2019 \emph{Learning in games with continuous action
  sets and unknown payoff functions}. \emph{Mathematical Programming}
  173(1):465--507, \urlprefix\url{http://dx.doi.org/10.1007/s10107-018-1254-8}.

\bibitem[{Nash(1950)}]{nash1950}
Nash JF, 1950 \emph{Equilibrium points in n-person games}. \emph{Proceedings of
  the National Academy of Sciences} 36(1):48--49,
  \urlprefix\url{http://dx.doi.org/10.1073/pnas.36.1.48}.

\bibitem[{Sugaya \protect\BIBand{} Yamamoto(2019)}]{sugaya2019}
Sugaya T, Yamamoto Y, 2019 \emph{Common learning and cooperation in repeated
  games} \urlprefix\url{https://dx.doi.org/10.2139/ssrn.3385516}.

\bibitem[{Swider(2016)}]{swider2016}
Swider M, 2016 \emph{$\$$3bn up in smoke: Samsung reveals losses on galaxy note
  7 recall}. \emph{Techradar}
  \urlprefix\url{https://www.techradar.com/news/this-is-how-much-samsung-says-itll-lose-on-the-galaxy-note-7-recall}.

\bibitem[{Tversky \protect\BIBand{}
  Kahneman(1992)}]{tversky1992prospectcumulative}
Tversky A, Kahneman D, 1992 \emph{Advances in prospect theory: Cumulative
  representation of uncertainty}. \emph{Journal of Risk and Uncertainty}
  5(4):297--323, \urlprefix\url{http://dx.doi.org/10.1007/BF00122574}.

\bibitem[{von Neumann(1928)}]{vonNeumann1928}
von Neumann J, 1928 \emph{Zur theorie der gesellschaftsspiele}.
  \emph{Mathematische Annalen} 100(1):295--320,
  \urlprefix\url{http://dx.doi.org/10.1007/BF01448847}.

\bibitem[{von Neumann \protect\BIBand{} Morgenstern(1947)}]{vonNeumann1947}
von Neumann J, Morgenstern O, 1947 \emph{Theory of games and economic behavior}
  (Princeton University Press).

\bibitem[{Warren(2016)}]{warren2016}
Warren T, 2016 \emph{Microsoft wasted at least $\$$8 billion on its failed
  {N}okia experiment}. \emph{The Verge}
  \urlprefix\url{https://www.theverge.com/2016/5/25/11766540/microsoft-nokia-acquisition-costs}.

\bibitem[{Wiseman(2005)}]{wiseman2005}
Wiseman T, 2005 \emph{A partial folk theorem for games with unknown payoff
  distributions}. \emph{Econometrica} 73(2):629--645,
  \urlprefix\url{http://dx.doi.org/10.1111/j.1468-0262.2005.00589.x}.

\bibitem[{Yekkehkhany et~al.(2019)Yekkehkhany, Arian, Hajiesmaili,
  \protect\BIBand{} Nagi}]{yekkehkhany2019risk}
Yekkehkhany A, Arian E, Hajiesmaili M, Nagi R, 2019 \emph{Risk-averse
  explore-then-commit algorithms for finite-time bandits}. \emph{arXiv preprint
  arXiv:1904.13387} .

\bibitem[{Yekkehkhany et~al.()Yekkehkhany, Arian, Nagi, \protect\BIBand{}
  Shomorony}]{yekkehkhanycost}
Yekkehkhany A, Arian E, Nagi R, Shomorony I, ???? \emph{A cost-based analysis
  for risk-averse explore-then-commit finite-time bandits} .

\end{thebibliography}

% Acknowledgments here
%%%%%%%%%%%%%%%%%%%%%%%%\ACKNOWLEDGMENT{%
% Enter the text of acknowledgments here
%%%%%%%%%%%%%%%%%%%%%%%%}% Leave this (end of acknowledgment)

% Appendix here
% Options are (1) APPENDIX (with or without general title) or 
%             (2) APPENDICES (if it has more than one unrelated sections)
% Outcomment the appropriate case if necessary
%
% \begin{APPENDIX}{<Title of the Appendix>}
% \end{APPENDIX}
%
%   or 
%
% \begin{APPENDICES}
% \section{<Title of Section A>}
% \section{<Title of Section B>}
% etc
% \end{APPENDICES}

% References here (outcomment the appropriate case) 

% CASE 1: BiBTeX used to constantly update the references 
%   (while the paper is being written).
%\bibliographystyle{informs2014trsc} % outcomment this and next line in Case 1
%\bibliography{<your bib file(s)>} % if more than one, comma separated

% CASE 2: BiBTeX used to generate mypaper.bbl (to be further fine tuned)
%\input{mypaper.bbl} % outcomment this line in Case 2

\begin{appendices}
\label{appendix}
The risk-averse best response of player $i$ to the strategy profile $\boldsymbol{\sigma}_{-i}$ is presented in Definition \ref{def_best_response_mixed} as the set of all probability distributions over the set
\begin{equation}
    \argmax_{s_i \in S_i} P \Big ( \overline{U}_i \left ( s_i, \boldsymbol{\sigma}_{-i} \right ) \geq \overline{\boldsymbol{U}}_i \left ( S_i \setminus s_i, \boldsymbol{\sigma}_{-i} \right ) \Big ).
\end{equation}
The same randomness on the action of players $[N] \setminus i$ is considered in $\overline{U}_i(s_i, \boldsymbol{\sigma}_{-i})$ for all $s_i \in S_i$ in this article. That is why for $s_i \neq s_i' \in S_i$, the random variables $\overline{U}_i(s_i, \boldsymbol{\sigma}_{-i})$ and $\overline{U}_i(s_i', \boldsymbol{\sigma}_{-i})$ are not independent of each other in a single play of the game.
On the other hand, independent randomness on the action of players $[N] \setminus i$ can be considered in $\overline{U}_i(s_i, \boldsymbol{\sigma}_{-i})$ for all $s_i \in S_i$.
In that case, $\overline{U}_i(s_i, \boldsymbol{\sigma}_{-i})$ is independent from $\overline{U}_i(s_i', \boldsymbol{\sigma}_{-i})$ for all $s_i \neq s_i' \in S_i$, so
\begin{equation}
\begin{aligned}
    \label{eq_proof_strict_dominance2}
    & P \Big ( \overline{U}_i \left ( s_i, \boldsymbol{\sigma}_{-i} \right ) \geq \overline{\boldsymbol{U}}_i \left ( S_i \setminus s_i, \boldsymbol{\sigma}_{-i} \right ) \Big ) \\
     \overset{(a)}{=} & \int \dots \int_{x_{s_i} \geq \boldsymbol{x}_{S_i \setminus s_i}} \left ( \prod_{{s}_i' \in S_i} \bar{f}_i(x_{{s}_i'} | ({s}_i', \boldsymbol{\sigma}_{-i})) dx_{{s}_i'} \right ) \\
    \overset{(b)}{=} & \int \dots \int_{x_{s_i} \geq \boldsymbol{x}_{S_i \setminus s_i}} \left ( \prod_{{s}_i' \in S_i} \sum_{\boldsymbol{s}_{-i} \in \boldsymbol{S}_{-i}} \Bigg ( f_i(x_{{s}_i'} | ({s}_i', \boldsymbol{s}_{-i})) \cdot \boldsymbol{\sigma}(\boldsymbol{s}_{-i}) \Bigg ) dx_{{s}_i'} \right ) \\
    %& \int \dots \int_{x_{s_i} \geq \boldsymbol{x}_{S_i \setminus s_i}} \left ( \sum_{s_{-i}^{1:|S_i|} \in \boldsymbol{S}_{-i}^{|S_i|}} \left ( \prod_{k = 1}^{|S_i|} \Bigg ( f_i(x_{\widehat{s}_i} | (\widehat{s}_i, \boldsymbol{s}_{-i})) \cdot \boldsymbol{\sigma}(\boldsymbol{s}_{-i}) \Bigg ) dx_{\widehat{s}_i} \right ) \right )
    \overset{(c)}{=} & \sum_{s_{-i}^{1:|S_i|} \in \boldsymbol{S}_{-i}^{|S_i|}} \left ( \left ( \prod_{k = 1}^{|S_i|} \boldsymbol{\sigma}(\boldsymbol{s}_{-i}^k) \right ) \cdot P \Big (U_i(s_i, \boldsymbol{s}_{-i}^1) \geq \boldsymbol{U}_i(S_i \setminus s_i, \boldsymbol{s}_{-i}^{2:|S_i|}) \Big )  \right ),
\end{aligned}
\end{equation}
where $(a)$ follows by the fact that all payoff distributions are independent of each other, so the pdf functions can be multiplied together to get the joint distribution of $\overline{U}_i(s_i, \boldsymbol{\sigma}_{-i})$ for all $s_i \in S_i$, $(b)$ follows by Equation \eqref{mixed_distribution}, $(c)$ is true by expanding the product and reformulating the product of the sum as the sum of products.
If Equation \eqref{eq_proof_strict_dominance2} is used in Equation \eqref{eq_find_mixed_strategy_RAE} to find the equilibrium of the game, we come up with a different equilibrium than that presented in the main body of this article. Let the equilibrium derived from Equations \eqref{eq_find_mixed_strategy_RAE} and \eqref{eq_proof_strict_dominance2} be called RAE$_2$, where following the same proof of Theorem \ref{theorem_existence}, RAE$_2$ exists for any finite $N$-player game.
Finding a strictly dominated strategy in the framework of RAE$_2$ is not as straightforward as for the Nash and RAE equilibria. In the following definition, the strict dominance is described for RAE$_2$.

\begin{definition}
\label{def_strict_dominance2}
A pure strategy $s_i \in S_i$ of player $i$ strictly dominates a second pure strategy $s_i' \in S_i$ of the player if
\begin{equation}
    \label{eq_strict_dominance2}
    P \Big ( U_i \left ( s_i, \boldsymbol{s}_{-i}^1 \right ) \geq \boldsymbol{U}_i \left ( S_i \setminus s_i, \boldsymbol{s}_{-i}^{2:|S_i|} \right ) \Big ) > P \Big ( U_i \left ( s_i', \boldsymbol{s}_{-i}^1 \right ) \geq \boldsymbol{U}_i \left ( S_i \setminus s_i', \boldsymbol{s}_{-i}^{2:|S_i|} \right ) \Big ), \forall \boldsymbol{s}_{-i}^{1:|S_i|} \in \boldsymbol{S}_{-i}^{|S_i|},
\end{equation}
where what we mean by $U_i \left ( s_i, \boldsymbol{s}_{-i}^1 \right )$ being greater than or equal to $\boldsymbol{U}_i \left ( S_i \setminus s_i, \boldsymbol{s}_{-i}^{2:|S_i|} \right )$ is that $U_i \left ( s_i, \boldsymbol{s}_{-i}^1 \right )$ is greater than or equal to $U_i \left ( \widehat{s}_i, \boldsymbol{s}_{-i}^k \right )$ for all $\widehat{s}_i \in S_i \setminus s_i$, where each $\widehat{s}_i \in S_i \setminus s_i$ is associated with a possibly different pure strategy of other players $\boldsymbol{s}_{-i}^k \in \boldsymbol{S}_{-i}$ for all $2 \leq k \leq |S_i|$. Note that the associations of $\widehat{s}_i \in S_i$ and $\boldsymbol{s}_{-i}^k \in \boldsymbol{S}_{-i}$ on both sides of Equation \eqref{eq_strict_dominance2} remain the same except for $s_i$ and $s_i'$ for which the associations are switched with each other.
\end{definition}

Note that the strictly dominated strategies of a player cannot be found from the risk-averse probably matrix, but finding a strictly dominated strategy needs more sophisticated calculations described in Definition \ref{def_strict_dominance2}.
A strictly dominated strategy cannot be the risk-averse best response to any mixed strategy profile of other players due to the following reason.
Consider that $s_i' \in S_i$ is strictly dominated by $s_i \in S_i$ for player $i$ as is stated in Definition \ref{def_strict_dominance2}. Then, for any $\boldsymbol{\sigma}_{-i} \in \boldsymbol{\Sigma}_{-i}$, we have
\begin{equation}
\begin{aligned}
    \label{eq_proof_strict_dominance22}
    & P \Big ( \overline{U}_i \left ( s_i, \boldsymbol{\sigma}_{-i} \right ) \geq \overline{\boldsymbol{U}}_i \left ( S_i \setminus s_i, \boldsymbol{\sigma}_{-i} \right ) \Big ) \\
     \overset{(a)}{=} & \sum_{s_{-i}^{1:|S_i|} \in \boldsymbol{S}_{-i}^{|S_i|}} \left ( \left ( \prod_{k = 1}^{|S_i|} \boldsymbol{\sigma}(\boldsymbol{s}_{-i}^k) \right ) \cdot P \Big (U_i(s_i, \boldsymbol{s}_{-i}^1) \geq \boldsymbol{U}_i(S_i \setminus s_i, \boldsymbol{s}_{-i}^{2:|S_i|}) \Big )  \right ) \\
    \overset{(b)}{>} & \sum_{s_{-i}^{1:|S_i|} \in \boldsymbol{S}_{-i}^{|S_i|}} \left ( \left ( \prod_{k = 1}^{|S_i|} \boldsymbol{\sigma}(\boldsymbol{s}_{-i}^k) \right ) \cdot P \Big (U_i(s_i', \boldsymbol{s}_{-i}^1) \geq \boldsymbol{U}_i(S_i \setminus s_i', \boldsymbol{s}_{-i}^{2:|S_i|}) \Big )  \right ) \\
    \overset{(c)}{=} & P \Big ( \overline{U}_i \left ( s_i', \boldsymbol{\sigma}_{-i} \right ) \geq \overline{\boldsymbol{U}}_i \left ( S_i \setminus s_i', \boldsymbol{\sigma}_{-i} \right ) \Big ),
\end{aligned}
\end{equation}
where $(a)$ is true by Equation \eqref{eq_proof_strict_dominance2}, $(b)$ is true by the assumption that the pure strategy $s_i'$ is strictly dominated by the pure strategy $s_i$ and using Equation \eqref{eq_strict_dominance2} in Definition \ref{def_strict_dominance2} on strict dominance, and $(c)$ follows the backward direction of steps $(a), (b),$ and $(c)$ for pure strategy $s_i'$ in Equation \eqref{eq_proof_strict_dominance2}.
By Equation \eqref{eq_proof_strict_dominance22} and Equation \eqref{eq_mixed_best_response} in Definition \ref{def_best_response_mixed} on the best response to a mixed strategy profile of other players, a strictly dominated pure strategy can never be a best response to any mixed strategy profile of other players. As a result, a strictly dominated pure strategy can be removed from the set of strategies of a player, and iterated elimination of strictly dominated strategies can be applied to a game under the framework of RAE$_2$ as well.

In order to get more insight into the new framework, the mixed strategy RAE$_2$ is worked out for Example \ref{example2}.
Consider that the first player selects $U$ with probability $\sigma_U$ and selects $D$ otherwise. Given the first player's mixed strategy $(\sigma_U, 1 - \sigma_U)$, with a little misuse of notation, denote the random variables denoting the second player's payoffs by selecting $L$ or $R$ with $L$ and $R$, respectively. As a result, for two independent games, where in both of them the first player independently plays according to the mixed strategy $(\sigma_U, 1- \sigma_U)$ and the second player selects $L$ and $R$ in the first and second games, respectively, using the law of total probability,
\begin{equation}
\label{eq_mixed_RAE_example22}
    \begin{aligned}
        & L \sim f_L(u) = \sigma_U \cdot f_4(u) + (1 - \sigma_U) \cdot f_3(u),\\
        & R \sim f_R(v) = \sigma_U \cdot f_3(v) + (1 - \sigma_U) \cdot f_1(v).
    \end{aligned}
\end{equation}
The second player is indifferent between selecting $L$ and $R$ if $P(L \geq R) = P(R \geq L)$. Since payoffs are continuous random variables, $P(R \geq L) = 1 - P(L \geq R)$; as a result, the second player is indifferent between the strategies in two independent games if $P(L \geq R) = 0.5$. By Equation \eqref{eq_mixed_RAE_example22} and the fact that payoffs are independent from each other, $P(L \geq R)$ can be computed as
\begin{equation}
    \begin{aligned}
        P(L \geq R) & = \int_\infty^\infty \int_v^\infty f_L(u) \cdot f_R(v) \ dudv \\
        & = \int_\infty^\infty \int_v^\infty \Big ( \sigma_U \cdot f_4(u) + (1 - \sigma_U) \cdot f_3(u) \Big ) \cdot \Big ( \sigma_U \cdot f_3(v) + (1 - \sigma_U) \cdot f_1(v) \Big ) \ dudv \\
        & = \sigma_U^2 \int_\infty^\infty \int_v^\infty f_4(u) \cdot f_3(v) \ dudv + \sigma_U(1 - \sigma_U) \int_\infty^\infty \int_v^\infty f_4(u) \cdot f_1(v) \ dudv + \\
        & \hspace{4.5mm} \sigma_U(1 - \sigma_U) \int_\infty^\infty \int_v^\infty f_3(u) \cdot f_3(v) \ dudv + (1 - \sigma_U)^2 \int_\infty^\infty \int_v^\infty f_3(u) \cdot f_1(v) \ dudv \\
        & = \sigma_U^2 P \big ( U_2(U, L) \geq U_2(U, R) \big ) + \sigma_U(1 - \sigma_U) P \big ( U_2(U, L) \geq U_2(D, R) \big ) + \\
        & \hspace{4.5mm} \sigma_U(1 - \sigma_U) P \big ( U_2(D, L) \geq U_2(U, R) \big ) + (1 - \sigma_U)^2 P \big ( U_2(D, L) \geq U_2(D, R) \big ) \\
        & = \frac25\sigma_U^2 + \sigma_U(1 - \sigma_U) + \frac12\sigma_U(1 - \sigma_U) + (1 - \sigma_U)^2 = -\frac{1}{10}\sigma_U^2 - \frac12\sigma_U + 1.
    \end{aligned}
\end{equation}
Letting $P(L \geq R) = 0.5$, then $-\frac{1}{10}\sigma_U^2 - \frac12\sigma_U + \frac12 = 0$ whose solution is the mixed strategy RAE$_2$. It can be computed that $\sigma_U = \frac{-5 + \sqrt{45}}{2} \approx 0.854$. As a result, due to symmetry, $\big (\sigma_1(U), \sigma_1(D) \big ) = (0.854, 0.146)$ and $\big (\sigma_2(L), \sigma_2(R) \big ) = (0.854, 0.146)$ form the mixed strategy RAE$_2$ of the game in Example \ref{example2}.

The risk-averse best response under the RAE$_2$ framework is compared against the risk-averse best response under the RAE framework by simulation for Example \ref{example4}.
The mixed strategy RAE and RAE$_2$ exist no matter what the value of the positive constant $a$ is in this example, but these equilibria depend on the value of the constant $a$.
As described in Section \ref{numerical_results}, the mixed strategy RAE and RAE$_2$ are characterized by $\sigma_1(U)$ that is plotted in Figure \ref{figure_example4Appendix}.

\begin{figure}[t]
\centering
\includegraphics[width=0.7\textwidth]{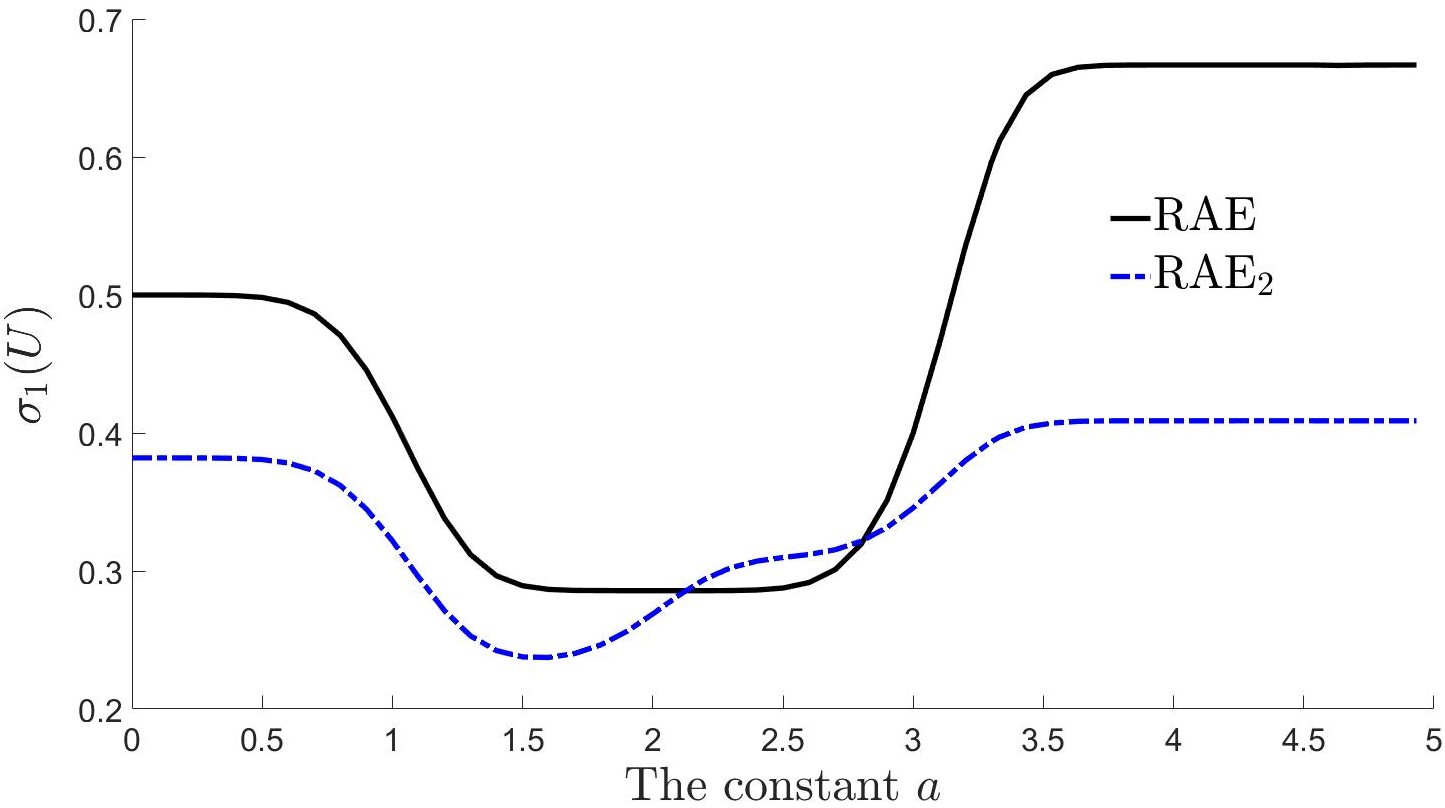}
\caption{The mixed strategy RAE and RAE$_2$ are determined by the value of $\sigma_1(U)$ in Example \ref{example4}. The mixed strategies depend on the value of the constant $a$, where $\sigma_1(U)$ is plotted above as a function of the constant $a$.}
\label{figure_example4Appendix}
\end{figure}

\begin{figure}[t]
\centering
\includegraphics[width=0.7\textwidth]{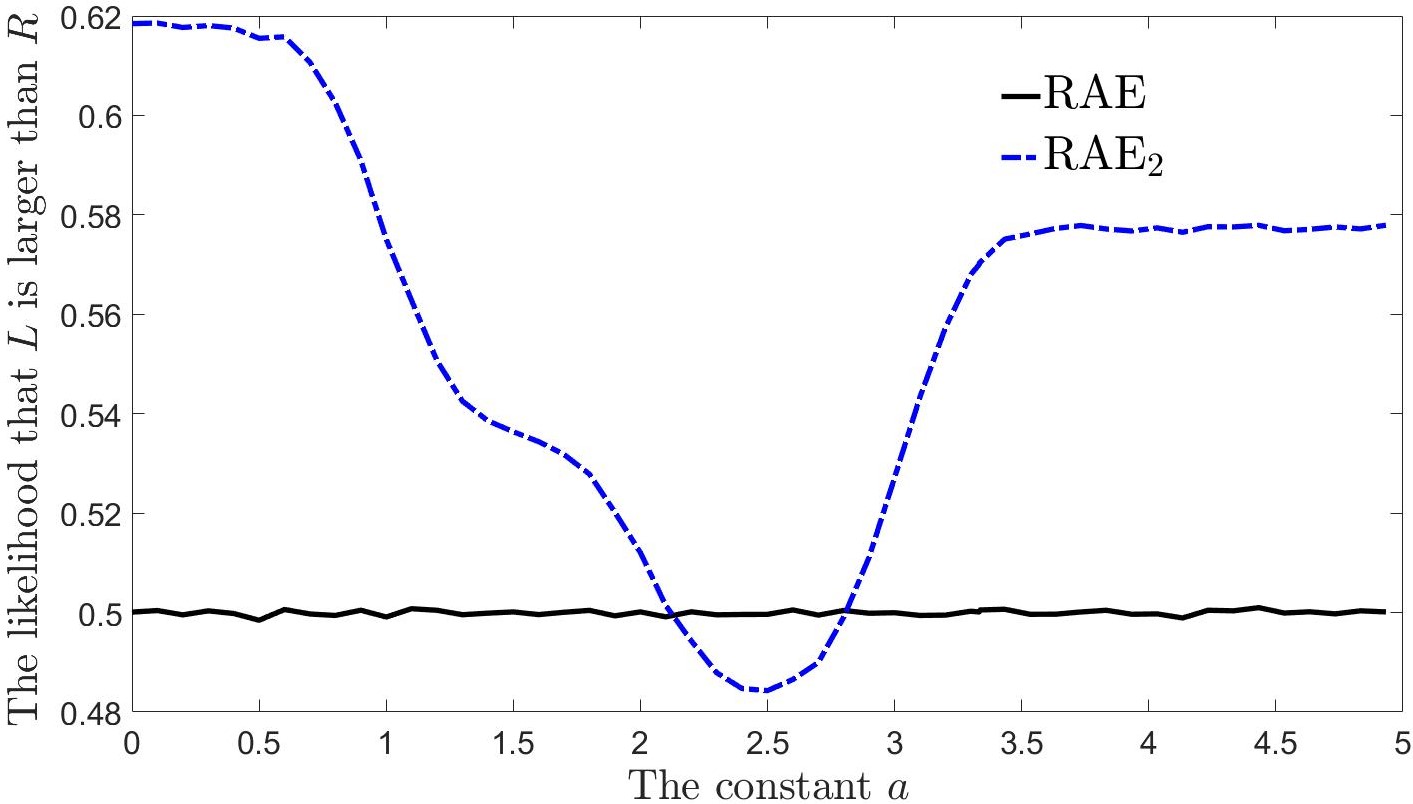}
\caption{The likelihood that playing strategy $L$ outperforms playing strategy $R$ by having a larger payoff in a single play of the game.}
\label{LgeqR}
\end{figure}

A game according to Example \ref{example4} is simulated for $10^6$ rounds for a fixed constant $a$. In each realization of the game, the first player selects a strategy according to the mixed strategy RAE, then the payoffs of the second player for his two strategies are compared to see which is larger.
After the $10^6$ games, the proportion of the games in which playing strategy $L$ outperforms playing strategy $R$ by having a larger payoff is computed and
plotted in Figure \ref{LgeqR} as a function of the constant $a$. The same procedure is performed for the mixed strategy RAE$_2$ and the result is plotted in the same figure.
As shown in the figure, under the RAE framework, the likelihood that playing strategy $L$ has a larger payoff than playing strategy $R$ is the same as that of observing heads on the flip of a fair coin, which makes the second player indifferent between his two strategies.
However, if the first player selects her strategy according to the mixed strategy RAE$_2$, it is more likely for the second player to get a larger payoff by selecting $L$ or $R$ except for two specific values of the constant $a$ as shown in the figure.
As a result, the risk-averse equilibrium presented in the main body of this article makes the second player indifferent between his two strategies since the chances of receiving a larger payoff from either strategy are the same in a single play of the game.

\end{appendices}

\end{document}